\title{Epistemic Logics of Structured Intensional Groups}
\author{Marta B\'ilkov\'a \qquad Igor Sedl\'ar\thanks{Work of both authors on this paper was supported by the grant no. 22-23022L CELIA of the Czech Science Foundation.}
\institute{Institute of Computer Science of the Czech Academy of Sciences\\ Prague, Czech Republic}
\email{\{bilkova, sedlar\}@cs.cas.cz}
}
\newcommand{\f}{\varphi}
\newcommand{\ff}{\psi}
\newcommand{\fff}{\chi}
\newcommand{\p}{\alpha}
\newcommand{\pp}{\beta}
\newcommand{\ppp}{\gamma}
\newcommand{\leftd}{\scalebox{0.9}{\normalfont\texttt{\textlangle}}}
\newcommand{\rightd}{\scalebox{0.9}{\normalfont\texttt{\textrangle}}}
\newcommand{\leftb}{\scalebox{0.9}{\normalfont\texttt{[}}}
\newcommand{\rightb}{\scalebox{0.9}{\normalfont\texttt{]}}}
\newcommand{\B}[1]{\leftb #1 \rightb}  
\newcommand{\D}[1]{\leftd #1 \rightd}
\newcommand{\DB}[1]{\leftd #1 \rightb} 
\newcommand{\BD}[1]{\leftb #1 \rightd}
\newcommand{\tot}{\leftrightarrow}
\newcommand{\uf}[1]{G(#1)}
\newcommand{\blue}[1]{\textcolor{blue}{#1}}
\newtheorem{theorem}{Theorem}
\newtheorem{lemma}{Lemma}
\newtheorem{definition}{Definition}
\newtheorem*{claim*}{Claim}
	\theoremstyle{definition}
\newtheorem{example}{Example}
	\theoremstyle{remark}
\newtheorem{remark}{Remark}
\newtheorem*{proofsketch}{Proof sketch} 
\begin{document}
\maketitle

\begin{abstract}
Epistemic logics of intensional groups lift the assumption that membership in a group of agents is common knowledge. Instead of being represented directly as a set of agents, intensional groups are represented by a property that may change its extension from world to world. Several authors have considered versions of the intensional group framework where group-specifying properties are articulated using structured terms of a language, such as the language of Boolean algebras or of description logic. In this paper we formulate a general semantic framework for epistemic logics of structured intensional groups, develop the basic theory leading to completeness-via-canonicity results, and show that several frameworks presented in the literature correspond to special cases of the general framework.   
\end{abstract}

\section{Introduction}
One of the usual assumptions of multi-agent epistemic logic is that groups of agents are given \emph{extensionally} as sets of agents. Membership in an extensional group is common knowledge among all agents and change in membership implies change of identity of an extensional group. This is not how we usually think of groups, however. We are commonly reasoning about groups in various contexts without knowing their extensions---we might routinely refer to groups such as ``bot accounts'', ``democrats'', or ``correct processes''---and we do not settle for reducing such groups to their extensions either, as clearly they can change across the state space of a system, or possible states of the world. To reason about groups in a more realistic way is made possible by groups being given to us \emph{intensionally} by a common property.

In their seminal work \cite{grove1995naming,GroveHalpern1993}, Grove and Halpern introduced an elegant generalization of multi-agent epistemic logic where labels denoting (sets of) agents are replaced by abstract \emph{names} whose extensions can vary from world to world. Their language contains two types of modalities, equipped with a relational Kripke-style semantics: $E_n \f$ means that every agent in the current extension of $n$ knows that $\f$ (``everyone named $n$ knows''), and $S_n \f$ means that some agent in the current extension of $n$ knows that $\f$ (``someone named $n$ knows''). In the intensional setting, $S_n$ 
is in general not definable using disjunction and other epistemic operators. Grove and Halpern also consider a natural extension of their basic framework where names are replaced by formulas expressing \emph{structured} group-defining concepts. 

Motivated mainly by applications such as dynamic networks of processes, a framework where the agent set can vary not only across models, but also from state to state, have been developed in a form of term-modal logic TML. Introduced by \cite{Fittingetal2001TML}, TML builds upon first order logic, indexing modalities by terms that can be quantified over. TML is conveniently expressive but undecidable in general, and the attention therefore turns to identify some decidable fragments (\cite{padmanabhaLIPIcs2019,padmanabhaLIPIcs2018} (see \cite{shtakser2018propositional} for more references relevant for epistemic logic). Epistemic logic with names of \cite{GroveHalpern1993} was seminal in some sense to the development of TML, and can be seen as its simple decidable fragment. Another closely related language is that of implicitly quantified modal logic, studied in \cite{padmanabha2019propositional}. 

To model non-rigididity of group names, Kooi \cite{Kooi2007} introduces dynamic term-modal logic with assignment modalities. Wang and Seligman \cite{wang2018names} adopt a minimalist approach of using the basic assignment modalities with a quantifier-free term modal logic to obtain an easy-to-handle fragment of the logic in \cite{Kooi2007}, expressing various de re/de dicto distinctions in reading higher-order knowledge\footnote{Both ours and theirs formalisms implement term-indexed modalities in a two-sorted language, they are however languages with different expressive power---one algebraic, the other first-order---a precise comparison being a subject of future work.}.

Grove and Halpern's work is enjoying a recent resurgence of interest in the epistemic logic community. \cite{BilkovaEtAl2021} identifies a monotone neighborhood-style semantics for Grove and Halpern's language and, building directly on the $S_n$ and $E_n$ modalities, considers expansions with non-rigid versions of common and distributed knowledge.
Distributed or common knowledge for intensional group names has also been studied by \cite{moses1988programming,naumov2018everyone}. A monotone neighborhood perspective has recently been adopted by \cite{Dingetal2023} and applied to a logic containing the somebody-knows modality of \cite{AgotnesWang2021}.
Humml and Schr\"{o}der \cite{HummlSchroeder2023} generalize Grove and Halpern's approach to structured names represented by formulas defining group membership, including e.g.~formulas of the description logic ALC. Their abstract-group epistemic logic (AGEL) contains a common knowledge modality as the only modality and, unlike in \cite{BilkovaEtAl2021,GroveHalpern1993}, their group names are rigid.

In this paper, following \cite{BilkovaEtAl2021,GroveHalpern1993,padmanabha2019propositional} to various extent, we adopt the perspective that both ``everyone labeled $a$ knows'' and ``someone labeled $a$ knows'' form a minimal epistemic language for group knowledge, that groups are understood intensionally, and that labels reflect their structured nature. We use languages built on top of classical propositional language containing modalities $\B{a}$ and $\DB{a}$ indexed by elements of an algebra of a given signature of interest. 
As our main contribution, we set up a general framework for epistemic logics for structured groups in terms of relational semantics involving an algebra of group labels that index (sets of) relations in each world (Section~\ref{sec:relationalsemantics}), we show how some existing versions of frame semantics of closely related logics can be modelled in such a way, and then generalize relational frames in terms of two-sorted algebras involving propositions and groups, develop an algebraic duality and prove completeness of the minimal logic (Subsection~\ref{sec:algebraicduality}). We show that the semantics can be seen as an interesting version of monotone neighborhood frame semantics (Subsection~\ref{sec:nbhd}). In the remaining part of the paper we discuss several examples of algebraic signatures giving rise to interesting and useful variants of group structure (Section~\ref{sec:cases}).
%
\section{Frame semantics for structured groups}\label{sec:relationalsemantics}
%
%
\begin{definition}[\bfseries Relational frame]\label{def:relational}
Let $\Sigma$ be an algebraic similarity type. A $\Sigma$-algebra is any structure of the form $\mathbf{X} = (X, \{ o^{\mathbf{X}} \mid o \in \Sigma \})$, where each $o^{\mathbf{X}}$ is an $n$-ary operator on $X$ for some $n$.
A \emph{relational $\Sigma$-frame} is $\mathfrak{F} = (W, R, \mathbf{G})$, where $W \neq \emptyset$ (``worlds''); $R \subseteq 2^{W \times W}$ (``agent relations''); and $\mathbf{G}$ is a $\Sigma$-algebra with universe $G \subseteq (2^{R})^{W}$ (``group intensions'').
%
\end{definition}
In a relational $\Sigma$-frame, the set of available agents is represented by a set of accessibility relations $R$. Functions $f \in G$ map possible worlds $w \in W$ to sets $f(w) \subseteq R$ corresponding to sets of agents. These functions can be seen as \emph{intensions} of properties of agents: the intension $f$ of a given property determines for each world $w$ the \emph{extension} $f(w)$ of the property at $w$, representing the set of agents that possess the given property in $w$. Crucially, properties may change their extensions from world to world.
%
%
%
\begin{remark}
We note that a relational $\Sigma$-frame can be seen as a $\Sigma$-algebra over a subset of a direct product of a family of Kripke frames. In particular, $G \subseteq \prod_{w \in W} (W, Q_w)$ where $Q_w \subseteq R$; every $\mathfrak{F}$ then gives rise to $\mathfrak{G}(\mathfrak{F}) = (G, \{ o^{\mathfrak{G}(\mathfrak{F})} \mid o \in \Sigma \})$. Conversely, every $\mathfrak{G} = (G, \{ o^{\mathfrak{G}} \mid o \in \Sigma \})$ where $G \subseteq \prod_{w \in W} (W, Q_w)$ such that $Q_w \subseteq R$ fo all $w \in W$ gives rise to a relational $\Sigma$-frame. 
\end{remark}
%
%
\begin{definition}[\bfseries Language]\label{def:language}
Let $Pr,Gr$ be denumerable sets of propositional variables and group variables respectivelly. 
For each $\Sigma$, the \emph{$\Sigma$-language} is two-sorted, consisting of group $\Sigma$-terms and $\Sigma$-formulas. 
%
The set of $\Sigma$-terms $Tm_{\Sigma}$, and the set of $\Sigma$-formulas $Fm_{\Sigma}$, are defined by the following grammar\blue{s}:
$$ Tm_{\Sigma}: \quad \p :=  \mathtt{a}\in Gr \mid o (\p_1, \ldots, \p_n) \qquad
Fm_{\Sigma}: \quad \f := \mathtt{p}\in Pr \mid \neg \f \mid \f \land \f \mid \B{\p} \f \mid \DB{\p} \f.
$$
\end{definition} 
$\Sigma$-terms represent structured intensional groups where the structure is articulated using the operators of $\Sigma$ (number of examples follow). Formulas $\B{\p}\f$ read as ``Everyone in the group (given by) $\p$ believes that $\f$'' and $\DB{\p}\f$ read as ``Someone in the group (given by) $\p$ believes that $\f$''. We assume the standard definitions of Boolean operators ($\top, \bot, \lor, \to, \tot$), and we define $\D{\p}\f := \neg \B{\p} \neg \f$, $\BD{\p}\f := \neg \DB{\p} \neg \f$.
\begin{definition}[\bfseries Complex algebra]\label{def:complex}
The \emph{complex algebra} of $\mathfrak{F}$ is $\mathfrak{F}^{+} = (\mathbf{F}, \mathbf{G}, \B{\,}^{+}, \DB{\,}^{+})$ where
$\mathbf{F}$ is the Boolean algebra of (all) subsets of $W$ and
$\B{\,}^{+}$, $\DB{\,}^{+}$ are functions of the type $2^{W} \times \mathbf{G} \to 2^{W}$ such that for $a \in G$ and $P \subseteq W$: 
$$
\B{a}^{+}P = \{ w \mid \forall r \in a(w) : r(w) \subseteq P \}\qquad
\DB{a}^{+}P = \{ w \mid \exists r \in a(w) : r(w) \subseteq P \}
$$
(where $r(w) = \{ u \mid (w,u) \in r \}$).
%
\end{definition}
\begin{definition}[\bfseries Relational model]\label{def:model}
A \emph{model} based on a $\Sigma$-frame $\mathfrak{F} = (W, R, \mathbf{G})$ ($\Sigma$-model) is $\mathfrak{M} = (\mathfrak{F}, \llbracket\,\rrbracket)$, where $\llbracket \,\rrbracket$ (the ``interpretation function'') is a homomorphism from $Tm_{\Sigma} \cup Fm_{\Sigma}$ to $\mathfrak{F}^{+}$\footnote{Being a homomorphism, $\llbracket \cdot \rrbracket$ is determined by the values it assigns to variables.}, that is,
\begin{itemize}
\item $\llbracket \p\rrbracket \in \mathbf{G}$ where $\llbracket o (\p_1, \ldots, \p_n) \rrbracket = o^{\mathbf{G}} (\llbracket \p_1\rrbracket, \ldots, \llbracket \p_n\rrbracket)$;
\item $\llbracket \f \rrbracket \subseteq W$ where 
\begin{center}
$\llbracket \neg \f\rrbracket = W \setminus \llbracket \f\rrbracket $ \quad
$\llbracket \f \land \ff\rrbracket = \llbracket \f\rrbracket \cap \llbracket \ff\rrbracket$ \quad
$\llbracket \B{\p} \f\rrbracket = \B{\llbracket \p\rrbracket}^{+} \llbracket \f\rrbracket$ \quad
$\llbracket \DB{\p} \f\rrbracket = \DB{\llbracket \p\rrbracket}^{+} \llbracket \f\rrbracket$.
\end{center}
\end{itemize}
A formula $\f$ is \emph{valid} in a model $\mathfrak{M}$ iff $\llbracket \f \rrbracket_{\mathfrak{M}} = W_{\mathfrak{M}}$, and valid in a class of frames iff it is valid in each model based on a frame in the given class. If $\mathsf{K}$ is a class of frames, then $Log(\mathsf{K})$ is the set of formulas valid in all frames in $\mathsf{K}$.
\end{definition}
\begin{example}\label{ex:BCR21}
Consider a relational frame for epistemic logic with names \cite{BilkovaEtAl2021}. Let $N$ (``names''),  $A$ (``agents'') and $W$ (``worlds'') be three non-empty sets. A \emph{relational  frame} is $(W, A, N, Q, \mu)$, where $Q : A \to 2^{W \times W}$ and $\mu : N \to (W \to 2^{A})$. 
It is easy to see that each relational frame gives rise to a relational $\emptyset$-frame where $R = \{ Q_i \mid i \in A \}$ and $G = \{ \mu^{\#} (n) \mid n \in N \}$, where $\mu^{\#}(n)(w) = \{ Q_i \mid i \in \mu (n)(w)\}$. Conversely, every relational $\emptyset$-frame can be seen as a relational frame where $A = R$, $Q$ is the identity function on $A$, $N = G$ and $\mu (a)(w) = a(w)$ for all $a \in G$. 
\end{example}

\begin{example}\label{ex:GH93}
Grove and Halpern \cite{GroveHalpern1993} consider a version of their framework where groups are referred to by means of formulas of a Boolean language.
A simplified version of this framework can be presented as an extension of the relational frames of the previous example. 
In these frames we require that $N$ is a term algebra over terms in the signature  $\Sigma_{\mathsf{BA}} = \{ \,\bar{\,}, \land, \lor \}$, and that $\mu$ satisfies the following conditions (we use $n,m$ as variables ranging over $\Sigma_{\mathsf{BA}}$-term  to highlight the relation to Grove and Halpern's framework):
$$
\mu(\bar{n}, w) = W \setminus \mu(n, w) \qquad
\mu(n \land m, w) = \mu(n, w) \cap \mu(m, w)\qquad
\mu(n \lor m, w) = \mu(n, w) \cup \mu(m, w) \, .
$$
It is easy to see that every relational frame of this kind (Boolean relational frame) gives rise to a relational $\Sigma_{\mathsf{BA}}$-frame where $R$ and $G$ are defined as in the previous example. Conversely, every relational $\Sigma_{\mathsf{BA}}$-\emph{model} gives rise to a Boolean relational model: $A = R$, $Q$ is the identity function on $A$, $N$ is the term algebra over $\Sigma_{\mathsf{BA}}$-terms and $\mu (n) = \llbracket n\rrbracket$. The semantic clauses displayed above then follow from the assumption that the interpretation function $\llbracket\,\rrbracket$ is a homomorphism.
\end{example}



\begin{example}\label{ex:HS23}
In their recent work \cite{HummlSchroeder2023} on logic with common knowledge of abstract groups AGEL, Humml and Schr\"{o}der consider a rigid common knowledge operator for groups with membership defined by formulas. Technically, the common knowledge modality is
labeled by formulas in an agent language $\mathcal{L}_{Ag}$ built over a fixed set $Ag$ of agents, defining groups of agents by semantical means of an agent model $\mathit{A}$. A formula $C_\alpha \phi$ reads as $\phi$ is commonly known among agents satisfying $\alpha$. The language is interpreted over AGEL frames of the form $(W,\mathit{A},\sim)$ where $W$ is a set of worlds, and $\sim$ is a set of agent epistemic indistinguishability relations.
In the sense of this paper, their agent language $\mathcal{L}_{Ag}$ determines a signature $\Sigma$, and the complex algebra $\mathbf{A}$ of the agent model $\mathit{A}$, i.e. the algebra on group propositions $\{\llbracket\alpha\rrbracket_A\subseteq Ag \mid \alpha\in \mathcal{L}_{Ag}\} $, is a $\Sigma$-algebra. As the agent language conservatively extends classical propositional logic, this algebra carries a boolean structure. 
It gives rise to a $\Sigma$-relational frame where $R = \sim$ and the $\Sigma$-algebra $\mathbf{G}$ is determined by $\mathbf{A}$ on the universe consisting of assignments $g:W \to 2^{R}$, with $g(w) = \{\sim_{\llbracket\alpha\rrbracket_A} \mid \alpha\in \mathcal{L}_{Ag}\}$ for each $w\in W$, where $\sim_{\llbracket\alpha\rrbracket_A}$ is the union of relations of agents satisfying $\alpha$. 
\end{example}
\begin{remark}
    Our framework covers also semantics of modal logics with operations on accessibility relations. A prominent example are models for (test-free) Propositional Dynamic Logic. A relational PDL-model corresponds to a relational $\Sigma_{\mathsf{KA}}$-model, where $\Sigma_{\mathsf{KA}} = \{ \cdot, +, \,^{*}, 1, 0\}$ is the signature of Kleene algebra, such that $R$ is the set of all relations on a set of worlds $W$ and the functions in $G$ are constant and their values are singletons. In particular, $\mathbf{G}$ is the algebra of constant functions $f \in (2^R)^W$ such that $f(w)$ is a singleton (therefore we may identify $f$ with the $r$ such that $f(w) = \{ r \}$) and $f \cdot g$ is relational composition of $f$ and $g$, $f + g$ is the union of $f$ and $g$, $f^*$ is the reflexive transitive closure of $f$, $1$ is the identity relation, and $0$ is the empty relation.
\end{remark}

\begin{definition}[\bfseries Logic]\label{def:logic}
Let $\Sigma$ be an algebraic signature. An \emph{epistemic logic with structured intensional groups over $\Sigma$} (or simply a \emph{$\Sigma$-logic}) is any set $L \subseteq Fm_{\Sigma}$ such that (for all $\p \in Tm_{\Sigma}$)
\begin{enumerate}
\item $L$ contains all substitution instances of classical tautologies and is closed under Modus Ponens;
\item $L$ contains all formulas of the form $(\mathrm{K}) \, \B{\p} (\f \to \ff) \to (\B{\p}\f \to \B{\p}\ff)$ and is closed under the Necessitation rule $(\mathrm{Nec})\, \dfrac{\f}{\B{\p}\f}$;
\item  $L$ contains all formulas of the form $\neg \B{\p} \bot \to \DB{\p} \top$ and $\DB{\p}\f \land \B{\p}\ff \to \DB{\p}(\f \land \ff)$\footnote{It is easy to show that every $\Sigma$-logic contains all formulas of the form $\B{\p}\top \tot \top$ and $\B{\p}(\f \land \ff) \tot (\B{\p}\f \land \B{\p}\ff)$, and that it is closed under the rule $$ \dfrac{\f \land \ff_1 \land \ldots \land \ff_n \to \fff}{\DB{\f} \land \B{\p} \ff_1 \land \ldots \land \B{\p} \ff_n \to \DB{\p} \fff} \, .$$}.
\end{enumerate}
\end{definition}


\subsection{Algebraic duality}\label{sec:algebraicduality}
In this section we introduce specific two-sorted algebras that generalize relational $\Sigma$-frames. In a sense to be specified below, completeness results for classes of relational $\Sigma$-frames correspond to specific representation results for these two-sorted algebras.
\begin{definition}[\bfseries Frame]\label{def:frame}
Let $\Sigma$ be an algebraic similarity type. A \emph{$\Sigma$-frame} is $\mathfrak{A} = (\mathbf{F}, \mathbf{G}, \B{\,}, \DB{\,})$, where 
$\mathbf{F} = (X, \land, \lor, \neg, \top, \bot)$ is a Boolean algebra; $\mathbf{G} = (A, \{ o^{\mathbf{G}} \mid o \in \Sigma\})$ is a $\Sigma$-type algebra; and 
$\B{\,}$ and $\DB{\,}$ are functions of the type $\mathbf{F} \times \mathbf{G} \to \mathbf{F}$ such that
%
\begin{multicols}{2}
    \noindent
    \begin{align}
    \B{a} \top &= \top \label{ax1}\\
    \B{a}(x \land y) & = \B{a}x \land \B{a}y\label{ax2}        
    \end{align}
    \begin{align}
    \neg \B{a} \bot &\leq \DB{a} \top \label{ax3}\\
    \DB{a}x \land \B{a}y & \leq \DB{a}(x \land y) \label{ax4}
\end{align}    
\end{multicols}

%
\end{definition}
A $\Sigma$-frame is a two-sorted algebra bringing together a Boolean algebra of ``propositions'' with a $\Sigma$-algebra of ``groups''. The modal operators $\B{\,}$ and $\DB{\,}$, resembling scalar multiplication in modules, take pairs consisting of a group and a proposition to a proposition.
Formulas in $Fm_{\Sigma}$-can be seen as terms of the type corresponding to $\Sigma$-frames. In fact, we can define the following notion of an evaluation, leading to a natural definition of the equational theory of a class of $\Sigma$-frames.
\begin{definition}[\bfseries Equational theory]
An \emph{evaluation} on a $\Sigma$-frame is any homomorphism $Tm_{\Sigma} \cup Fm_{\sigma} \to \mathfrak{A}$, that is, any function $e$ such that
$e(o(\f_1, \ldots, \f_n)) = o^{\mathbf{F}} (e(\f_1), \ldots, e(\f_n))$ for all Boolean operators $o$;
 $e(o(\p_1, \ldots, \p_n)) = o^{\mathbf{G}} (e(\f_1), \ldots, e(\f_n))$ for all $\Sigma$-operators $o$;
 and $e(\B{\p}\f) = \B{e(\p)}e(\f)$ and $e(\DB{\p}\f) = \DB{e(\p)}e(\f)$.
%
A \emph{$\Sigma$-formula equation} is an expression of the form $\f \approx \ff$ where $\f, \ff \in Fm_{\Sigma}$. An equation $\f \approx \ff$ is \emph{valid in $\mathfrak{A}$} iff $e(\f) = e(\ff)$ for all evaluations $e$ on $\mathfrak{F}$. The \emph{equational theory} of a class $\mathsf{F}$ of $\Sigma$-frames is the set of all $\Sigma$-formula equations that are valid in all frames in $\mathsf{F}$, denoted as $Eq(\mathsf{F})$.\footnote{We note that it would make sense also to consider $\Sigma$-group equations as expressions of the form $\p \approx \pp$ where $\p, \pp \in Tm_{\Sigma}$, and define the group-equational theory of a class of frames, but we will not pursue this topic here.}.
\end{definition}  
\begin{remark}
    Dynamic algebras \cite{Kozen1980,Pratt1991a}, the algebraic counterparts of relational models for Propositional Dynamic Logic, are related to $\Sigma$-frames. A dynamic algebra is a pair $(\mathbf{F}, \mathbf{G}, \B{\,})$, where $\mathbf{F}$ is a Boolean algebra, $\mathbf{G}$ is a Kleene algebra, and $\B{\,} : \mathbf{G} \times \mathbf{F} \to \mathbf{F}$ satisfying our axioms (1--2) and further set of equations and quasi-equations. Therefore, dynamic algebras can be seen as a class of $\DB{\,}$-free reducts of $\Sigma_{\mathsf{KA}}$-frames.
\end{remark}

\begin{definition}[\bfseries Ultrafilter frame]\label{def:ultra}
Let $\mathfrak{A} = (\mathbf{F}, \mathbf{G}, \B{\,}, \DB{\,})$ be a $\Sigma$-frame. The \emph{ultrafilter frame of $\mathfrak{A}$} is $\mathfrak{A}_{+} = (\mathrm{Uf}(\mathbf{F}), R_{+}, \mathbf{G}_{+})$ where $\mathrm{Uf}(\mathbf{F})$ is the set of all ultrafilters on $\mathbf{F}$ (we define $\hat{x} = \{ u \in \mathrm{Uf}(\mathbf{F}) \mid x \in u \}$);
\begin{itemize}
\item $R_{+} = \{ r_{a, x} \mid x \in \mathbf{F} \And a \in \mathbf{G} \}$, where \ \ 
$r_{a, x} : w \mapsto \bigcap \{ \hat{y} \mid \B{a}y \in w\} \cap \hat{x} \, ;$
%
\item $G_{+} = \{ \uf{a} \mid a \in \mathbf{G} \} \subseteq (2^{R_{+}})^{\mathrm{Uf}(\mathbf{F})}$ such that $\forall u \in \mathrm{Uf}(\mathbf{F})$, \ $\uf{a}(u) = \{ r_{a, x} \mid \DB{a}x \in u \}$\\
(we will often write $\uf{a, u}$ instead of $\uf{a}(u)$);
%
\item $\mathbf{G}_{+} = (G_{+}, \{ o_{+} \mid o \in O_{\Sigma} \})$ where \ $o_{+} (\uf{a_1}, \ldots, \uf{a_n})(u) = \uf{o (a_1, \ldots, a_n)}(u)$.
%
\end{itemize}
\end{definition}
%
%
%
\begin{definition}[\bfseries Morphisms of $\Sigma$-frames]\label{def:morph}
Let $\mathfrak{A}_1, \mathfrak{A}_2$ be two $\Sigma$-frames. A ($\Sigma$-frame) \emph{morphism} is a function $f: \mathfrak{A}_1 \to \mathfrak{A}_2$ such that
\begin{multicols}{2}
\begin{itemize}
\item[(m1)] $f$ is a homomorphism from $\mathbf{F}_1$ to $\mathbf{F}_2$;
\item[(m2)] $f$ is a homomorphism from $\mathbf{G}_1$ to $\mathbf{G}_2$;
\item[(m3)] $f ( \B{a}_1 x) = \B{f(a)}_2 f(x)$; 
\item[(m4)] $f ( \DB{a}_1 x) = \DB{f(a)}_2 f(x)$.
\end{itemize}
\end{multicols}
\end{definition}
A \emph{quasi-embedding} of $\mathfrak{A}_1$ into $\mathfrak{A}_2$ is a morphism $f : \mathfrak{A}_1 \to \mathfrak{A}_2$ such that 
$f(x)=f(y) \to x=y$
for all $x,y$ in $\mathbf{F}_1$. An \emph{embedding} of $\mathfrak{A}_1$ into $\mathfrak{A}_2$ is a quasi-embedding where 
$f(a)=f(b) \to a=b$
for all $a,b$ in $\mathbf{G}_1$. A \emph{quasi-isomorphism} is a surjective quasi-embedding and an \emph{isomorphism} is a surjective embedding.
The \emph{canonical embedding algebra} of $\mathfrak{A}$ is $(\mathfrak{A}_{+})^{+}$ and the \emph{ultrafilter extension} of $\mathfrak{F}$ is $(\mathfrak{F}^{+})_{+}$. The \emph{canonical morphism} is a function $f: \mathfrak{A} \to (\mathfrak{A}_{+})^{+}$ with $f(x) = \hat{x}$ for $x \in \mathbf{F}$ and $f(a) = \uf{a}$ for $a \in \mathbf{G}$.
\begin{lemma}\label{lem:canmorph}
The canonical morphism is a quasi-embedding.
\end{lemma}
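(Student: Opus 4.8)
The plan is to verify that the canonical morphism $f : \mathfrak{A} \to (\mathfrak{A}_{+})^{+}$, defined by $f(x) = \hat{x}$ and $f(a) = \uf{a}$, satisfies (m1)--(m4) from Definition~\ref{def:morph} and is injective on the Boolean component. First I would dispatch (m1): the map $x \mapsto \hat{x}$ is the usual Stone embedding of $\mathbf{F}$ into the powerset of $\mathrm{Uf}(\mathbf{F})$, so it is a Boolean homomorphism, and it is injective because distinct elements of a Boolean algebra are separated by some ultrafilter --- this is exactly the quasi-embedding condition $f(x) = f(y) \to x = y$. Next, (m2): by the definition of $\mathbf{G}_{+}$ in Definition~\ref{def:ultra}, $o_{+}(\uf{a_1}, \ldots, \uf{a_n}) = \uf{o(a_1,\ldots,a_n)}$, which says precisely that $a \mapsto \uf{a}$ commutes with every $o \in \Sigma$, i.e.\ it is a $\Sigma$-algebra homomorphism from $\mathbf{G}$ to $\mathbf{G}_{+}$. (One should note in passing that $\mathbf{G}_{+}$ is well-defined as a $\Sigma$-algebra, i.e.\ the operations $o_{+}$ are well-defined on $G_{+}$; if $G$-elements are formally indexed so that $\uf{a} = \uf{b}$ forces equality of the underlying operations, this is automatic, otherwise it needs the observation that $\uf{a}$ determines $a$'s relevant behaviour through the $r_{a,x}$ and $\DB{a}x$ data.)

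**The modal clauses.**
The substantive work is (m3) and (m4): I must show $\widehat{\B{a}x} = \B{\uf{a}}^{+}\hat{x}$ and $\widehat{\DB{a}x} = \DB{\uf{a}}^{+}\hat{x}$ as subsets of $\mathrm{Uf}(\mathbf{F})$, where $\B{\,}^{+}$ and $\DB{\,}^{+}$ are computed in the complex algebra of $\mathfrak{A}_{+}$ via Definition~\ref{def:complex}. Unwinding the right-hand side of (m3): $w \in \B{\uf{a}}^{+}\hat{x}$ iff for every $r \in \uf{a}(w) = \{ r_{a,y} \mid \DB{a}y \in w\}$ we have $r(w) \subseteq \hat{x}$, where $r_{a,y}(w) = \bigcap\{\hat{z} \mid \B{a}z \in w\} \cap \hat{y}$. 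The inclusion $r_{a,y}(w) \subseteq \hat{x}$ holds iff $\hat{x}$ contains $\bigcap\{\hat{z} \mid \B{a}z \in w\} \cap \hat{y}$; by compactness of the Stone space (ultrafilter theorem) this is equivalent to the existence of finitely many $z_1,\ldots,z_k$ with $\B{a}z_i \in w$ such that $z_1 \land \cdots \land z_k \land y \leq x$, and using axioms \eqref{ax1}--\eqref{ax2} (so $\B{a}$ distributes over finite meets) this collapses to: there is a single $z$ with $\B{a}z \in w$ and $z \land y \leq x$. So the forward direction of (m3): if $\B{a}x \in w$, take $z = x$ to see every relevant $r_{a,y}(w) \subseteq \hat{x}$, hence $w \in \B{\uf{a}}^{+}\hat{x}$; I also need the edge case where $\uf{a}(w) = \emptyset$, i.e.\ no $y$ has $\DB{a}y \in w$ --- then $w$ is trivially in $\B{\uf{a}}^{+}\hat{x}$, and separately $\DB{a}\top \notin w$ forces (by axiom \eqref{ax3}, $\neg\B{a}\bot \leq \DB{a}\top$) that $\B{a}\bot \in w$, hence $\B{a}x \in w$, so the biconditional still holds there. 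For the converse, suppose $\B{a}x \notin w$ and $\uf{a}(w)\neq\emptyset$; I need a single witness $y$ with $\DB{a}y \in w$ but $r_{a,y}(w) \not\subseteq \hat{x}$, which by the compactness reformulation means: for that $y$, no $z$ with $\B{a}z\in w$ satisfies $z \land y \leq x$. The natural choice is $y = \top$ when $\DB{a}\top \in w$; then I need that no $z$ with $\B{a}z \in w$ has $z \leq x$, which is immediate since $\B{a}$ is monotone (from \eqref{ax2}) and $\B{a}x \notin w$. If $\DB{a}\top \notin w$ but $\uf{a}(w) \neq \emptyset$, there is some $y$ with $\DB{a}y\in w$, $y \neq \top$; here axiom \eqref{ax4} ($\DB{a}y \land \B{a}z \leq \DB{a}(y \land z)$) is the tool --- if every $z$ with $\B{a}z\in w$ had $z \land y \leq x$ then $\B{a}(z\land y) \leq \B{a}x$... one has to be slightly careful and I expect the clean statement is that the choice of $y$ and use of \eqref{ax4} together force $\DB{a}x \in w$ or $\B{a}x\in w$, contradicting the assumption; this bookkeeping around \eqref{ax3}--\eqref{ax4} is where the argument is most delicate.

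**The diamond clause and the main obstacle.**
For (m4), the target is $w \in \DB{\uf{a}}^{+}\hat{x}$ iff there exists $r \in \uf{a}(w)$ with $r(w) \subseteq \hat{x}$, i.e.\ there is $y$ with $\DB{a}y \in w$ and (by the same compactness step) some $z$ with $\B{a}z \in w$ and $z \land y \leq x$. I claim this is equivalent to $\DB{a}x \in w$: for the forward direction, from $\DB{a}y \in w$ and $\B{a}z \in w$ axiom \eqref{ax4} gives $\DB{a}(y\land z)\in w$, and from $y \land z \leq x$ monotonicity of $\DB{a}$ (derivable, as the footnote to Definition~\ref{def:logic} indicates, from \eqref{ax4} and \eqref{ax1}) yields $\DB{a}x \in w$; conversely, if $\DB{a}x\in w$, take $y = x$ and $z = \top$ (using $\B{a}\top = \top \in w$ from \eqref{ax1} and $\top \land x \leq x$), giving the required witness $r_{a,x} \in \uf{a}(w)$ with $r_{a,x}(w) = \bigcap\{\hat{z}\mid \B{a}z\in w\} \cap \hat{x} \subseteq \hat{x}$. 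I expect the main obstacle to be not any single deep step but the careful handling of the compactness/Stone-duality reduction ($r_{a,y}(w)\subseteq\hat{x}$ to a finitary algebraic inequality) together with the degenerate cases where $\uf{a}(w)$ is empty or where $\DB{a}\top\notin w$, since these are exactly the situations that the non-standard axioms \eqref{ax3} and \eqref{ax4} are designed to control and getting the quantifier alternations right there requires attention. Finally, since $f$ need not be injective on $\mathbf{G}$ (two distinct group elements may induce the same $\uf{\cdot}$), $f$ is only claimed to be a quasi-embedding, not an embedding, which is consistent with the statement of the lemma.
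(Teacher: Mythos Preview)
Your proof is correct and essentially matches the paper's: both dispatch (m1)--(m2) the same way and handle (m3)--(m4) by reducing $r_{a,y}(w)\subseteq\hat{x}$ to a finitary inequality via the ultrafilter theorem and then invoking axioms \eqref{ax1}--\eqref{ax4}. The case you hesitate over in (m3) --- $\DB{a}\top\notin w$ with $\uf{a}(w)\neq\emptyset$ --- is vacuous, since monotonicity of $\DB{a}$ (derivable from \eqref{ax1} and \eqref{ax4}) gives $\DB{a}y\leq\DB{a}\top$ for every $y$; so your choice $y=\top$ already covers all nontrivial cases, exactly as the paper does.
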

For each signature $\Sigma$, Lemma \ref{lem:canmorph} can be used to prove completeness of the basic $\Sigma$-logic with respect to all relational $\Sigma$-frames. In order to show this, we point out a useful example of a $\Sigma$-frame.

\begin{example}\label{ex:canframe-basic}
Let $L$ be a $\Sigma$-logic. Let $\equiv_L$ be a binary relation on $Fm_{\Sigma}$ such that $\f \equiv_L \ff$ iff $\f \tot \ff \in L$. Let $[\f]_L$ be the equivalence class of $\f$ under $\equiv_L$. It can be shown that $\equiv_L$ is a congruence on $Fm_{\Sigma}$. Hence, we obtain the Boolean algebra $\mathbf{F}^{L}$ of equivalence classes $[\f]_L$, where $o^{\mathbf{F}^{L}} ([\f_1]_L, \ldots, [\f_n]_L) = [o(\f_1, \ldots, \f_n)]_L$ for all Boolean operators $o$. We define $\mathbf{G}^{L}$ as the term algebra over $Tm_{\Sigma}$. Moreover, let $\B{\,}^{L}$ and $\DB{\,}^{L}$ be functions of the type $\mathbf{F}^{L} \times \mathbf{G}^{L} \to \mathbf{F}^{L}$ such that \ \ 
$ \B{\p}^{L} [\f]_L = [\B{\p}\f]_L \:\text{and}\: 
\DB{\p}^{L} [\f]_L = [\DB{\p}\f]_L
$
(note that these functions are well defined since $\equiv_L$ is a congruence).
Let us define the \emph{basic canonical $L$-frame} as $\mathfrak{B}^{L} = (\mathbf{F}^{L}, \mathbf{G}^{L}, \B{\,}^{L}, \DB{\,}^{L})$. It is clear that $\f \in L$ iff $\f \approx \top$ is valid in $\mathfrak{B}^{L}$.
\end{example}  

\begin{theorem}[\bfseries Completeness]\label{thm:completeness}
For all $\Sigma$, the smallest $\Sigma$-logic is the set of $\Sigma$-formulas valid in all relational $\Sigma$-models.
\end{theorem}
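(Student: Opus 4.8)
The plan is to establish the two inclusions separately. Write $L$ for the smallest $\Sigma$-logic and $\mathsf{K}$ for the class of all relational $\Sigma$-frames. Soundness, $L \subseteq Log(\mathsf{K})$, is routine; completeness, $Log(\mathsf{K}) \subseteq L$, is where the algebraic duality of Subsection~\ref{sec:algebraicduality} is used and amounts to a canonical-model argument via the ultrafilter extension.

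For soundness I would first note that the complex algebra $\mathfrak{F}^{+}$ of any relational $\Sigma$-frame $\mathfrak{F}$ is itself a $\Sigma$-frame in the sense of Definition~\ref{def:frame}: $\mathbf{F}$ and $\mathbf{G}$ have the right type by construction, and unwinding Definition~\ref{def:complex} shows that $\B{\,}^{+},\DB{\,}^{+}$ satisfy \eqref{ax1}--\eqref{ax4} --- e.g.\ \eqref{ax3} holds because $a(w)=\emptyset$ already forces $w\in\B{a}^{+}\bot$, and \eqref{ax4} holds because a witness $r\in a(w)$ for $\DB{a}^{+}x$ also satisfies $r(w)\subseteq y$ whenever $w\in\B{a}^{+}y$. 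By Definition~\ref{def:model}, validity of a formula $\f$ in a model based on $\mathfrak{F}$ is exactly validity of the equation $\f\approx\top$ under the corresponding evaluation on $\mathfrak{F}^{+}$; so it suffices to check that every axiom scheme of Definition~\ref{def:logic} yields an equation $\f\approx\top$ holding in every $\Sigma$-frame and that Modus Ponens and Necessitation preserve this. All of these follow from the Boolean structure of $\mathbf{F}$ together with \eqref{ax1}--\eqref{ax4} (Necessitation uses \eqref{ax1}); incidentally this also shows $L$ is consistent, so $\bot\notin L$.

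For completeness, suppose $\f\notin L$. Take the basic canonical $L$-frame $\mathfrak{B}^{L}$ of Example~\ref{ex:canframe-basic} together with the evaluation $e_{0}$ on it with $e_{0}(\mathtt{p})=[\mathtt{p}]_{L}$ and $e_{0}(\mathtt{a})=\mathtt{a}$; a trivial induction gives $e_{0}(\f)=[\f]_{L}$, so $\f\notin L$ yields $e_{0}(\f)\neq\top$ in $\mathbf{F}^{L}$, and since $\bot\notin L$ the algebra $\mathbf{F}^{L}$ is nontrivial and so has an ultrafilter. Now form the ultrafilter frame $(\mathfrak{B}^{L})_{+}$ of Definition~\ref{def:ultra}, which by construction is a relational $\Sigma$-frame, and its complex algebra $\big((\mathfrak{B}^{L})_{+}\big)^{+}$ --- the canonical embedding algebra of $\mathfrak{B}^{L}$ --- which is again a $\Sigma$-frame. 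By Lemma~\ref{lem:canmorph} the canonical morphism $g:\mathfrak{B}^{L}\to\big((\mathfrak{B}^{L})_{+}\big)^{+}$ is a quasi-embedding, hence a $\Sigma$-frame morphism that is injective on the Boolean component. Then $\llbracket\,\rrbracket:=g\circ e_{0}$ is a homomorphism $Tm_{\Sigma}\cup Fm_{\Sigma}\to\big((\mathfrak{B}^{L})_{+}\big)^{+}$, i.e.\ an interpretation function for a model $\mathfrak{M}$ based on the relational $\Sigma$-frame $(\mathfrak{B}^{L})_{+}$; and since $g$ is a Boolean homomorphism injective on Boolean elements, $\llbracket\f\rrbracket=g(e_{0}(\f))\neq g(\top)=\top=W_{\mathfrak{M}}$. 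So $\f$ is refuted in $\mathfrak{M}$, whence $\f\notin Log(\mathsf{K})$, which completes the argument.

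The only genuinely nontrivial ingredient is Lemma~\ref{lem:canmorph}, the J\'onsson--Tarski-style truth lemma for this two-sorted setting, which is already in hand; the point worth emphasising is that a quasi-embedding need only be injective on the Boolean component, and this is exactly what is needed to refute a formula, since formulas denote Boolean elements --- the non-rigidity of the group algebra is, so to speak, harmless for the logic of formulas. The remaining steps --- verifying \eqref{ax1}--\eqref{ax4} for $\mathfrak{F}^{+}$, confirming that $(\mathfrak{B}^{L})_{+}$ genuinely meets Definition~\ref{def:relational} (in particular that the operations $o_{+}$ on $\mathbf{G}_{+}$ are well defined and the world set is nonempty), and matching evaluations on $\mathfrak{F}^{+}$ with interpretation functions of models over $\mathfrak{F}$ --- are routine unwindings of the definitions.
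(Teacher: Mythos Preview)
Your proposal is correct and follows essentially the same route as the paper: soundness is routine, and for completeness you pass from the basic canonical $L$-frame $\mathfrak{B}^{L}$ to the relational $\Sigma$-frame $(\mathfrak{B}^{L})_{+}$ and invoke Lemma~\ref{lem:canmorph} to turn the canonical evaluation into a refuting interpretation. Your presentation is in fact a bit more explicit than the paper's---factoring the interpretation as $g\circ e_{0}$, noting that consistency of $L$ guarantees $\mathrm{Uf}(\mathbf{F}^{L})\neq\emptyset$, and observing that injectivity on the Boolean sort alone suffices---but the underlying argument is the same.
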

\begin{proof}
Fix a $\Sigma$ and take the smallest $\Sigma$-logic $L$. Soundness is easily checked. To show completeness, take the relational $\Sigma$-frame $(\mathfrak{B}^{L})_{+}$ (the canonical relational $L$-frame). Lemma \ref{lem:canmorph} entails that if $\f \notin L$, then $\f$ is not valid in $(\mathfrak{B}^{L})_{+}$. (Define a model where $\llbracket \f \rrbracket = \widehat{[\f]_L}$ and $\llbracket \p\rrbracket = \p$. Lemma \ref{lem:canmorph} implies that $\llbracket \,\rrbracket$ is indeed an interpretation function. Since $\f \tot \top \not\in L$, we have $\llbracket \f\rrbracket \neq \llbracket \top\rrbracket$ by the Prime Filter Theorem, and so $\f$ is not valid in $(\mathfrak{B}^{L})_{+}$.) 
\end{proof}

%
\subsection{Neighborhood semantics}\label{sec:nbhd}
The modalities $\DB$ and $\B$ are monotone modalities of the $\exists\forall$ and $\forall\forall$ type and can therefore be studied in terms of monotone neighborhood semantics, if we understand sets $\{r(w) \mid r\in a(w)\}$ as so called core neighborhood sets \cite{hansen2003monotonic,pacuit2017neighborhood}. 
Relational $\Sigma$-frames generalize relational frames for epistemic logic with names 
\cite{BilkovaEtAl2021,GroveHalpern1993} (Example~\ref{ex:BCR21}), which are categorialy equivalent to monotone neighborhood frames with neighborhood sets indexed by the set of names. 
Not surprisingly, a closely related connection arises between relational $\Sigma$-frames of this paper and monotone neighborhood frames where neighborhoods are indexed with algebraic terms. 
This will allow us to adapt and apply the well understood model theory of monotone neighborhood frames (for which we mainly refer to \cite{hansen2003monotonic,hansen2007bisimulation,pacuit2017neighborhood}) to study, among others, algebraic duality or modal definability on a convenient level of abstraction. A similar perspective has recently been adopted also by \cite{Dingetal2023} on a logic containing a somebody-knows modality, previously studied by \cite{AgotnesWang2021}. Neither of the approaches in \cite{hansen2003monotonic,Dingetal2023} however includes both $\exists\forall$ and $\forall\forall$ types of modalities, and therefore similar modifications of the general theory as those adopted in \cite{BilkovaEtAl2021} are necessary, and the algebraic structure underlying the labelling of groups needs to be captured additionally.
\begin{definition}[\bfseries Neighborhood frames]\label{def:nbhdframes}
A neighborhood $\Sigma$-frame $\mathfrak{F}$ is a tuple $(W, \mathbf{G}, \{\nu_a\}_{a \in G})$ where $W$ is a set of states, $\mathbf{G}$ is a $\Sigma$-algebra, and for each $a \in G$, 
$\nu_a: W \to 2^{2^{W}}$ 
is a neighborhood function that assigns to each state $w$ a set of sets of states\footnote{For the minimal $\Sigma$-logic, we do not require any additional (algebraic) properties from the assignment  $\nu_{\_}: \mathbf{G} \to [W,2^{2^{W}}]$. 
They might however become desirable in the examples that follow, and we will treat them as additional properties defining particular classes of frames (modally definable or not).}.
\end{definition}
%
%
\begin{definition}[\bfseries Semantics in neighborhood models]\label{def:satis:nbhd}
The complex algebra $\mathfrak{F}^+$ of a neighborhood $\Sigma$-frame $\mathfrak{F}$ is given as the expansion of the boolean algebra of subsets of $W$ by 
$$
\B{a}^{+}P = \{ w \mid \forall X \in \nu_a(w) : X \subseteq P \}\qquad
\DB{a}^{+}P = \{ w \mid \exists X \in \nu_a(w) : X \subseteq P \}.
$$
An interpretation function $\llbracket \rrbracket$ is a homomorphism from $Tm_{\Sigma} \cup Fm_{\Sigma}$ to $\mathfrak{F}^{+}$, i.e. 
\[
\llbracket \B{\p} \f \rrbracket = \{ w \mid \forall X\in \nu_{\llbracket\p \rrbracket}(w)  \ X\subseteq \llbracket\f\rrbracket\} \qquad
\llbracket \DB{\p} \f \rrbracket = \{ w \mid \exists X\in \nu_{\llbracket\p \rrbracket}(w)  \ X\subseteq \llbracket\f\rrbracket \}
\]
%
\end{definition}
%
%
\begin{definition}[\bfseries Neighborhood frame morphisms]\label{def:nbhdmorphisms}
Neighborhood $\Sigma$-frame morphisms are pairs of maps $(g:\mathbf{G} \to \mathbf{G'}, f:W \to W')$, where $g$ is a homomorphism of $\Sigma$-algebras, satisfying
\[
\mbox{(there)}\ X\in\nu_a(w) \Rightarrow f[X]\in\nu'_{g(a)}(f(w)) \qquad \mbox{(back)}\ Y\in\nu'_{g(a)}(f(w)) \Rightarrow \exists X(f[X]=Y\ \&\ X\in\nu_{a}(w))
\]
%
\end{definition}
Monotonicity of the modalities is built into the semantical definition rather than into the frame definition. As such, it corresponds to core neighborhood frames from \cite{hansen2003monotonic}, and the morphisms resemble core bounded morphisms of monotone neighborhood frames from \cite[Definition 4.6]{hansen2003monotonic}, additionally involving the algebraic homomorphism $g:\mathbf{G} \to \mathbf{G'}$ which can be interpreted as allowing to ``rename'' the groups along frame morphisms in a structured way. 
Understanding frame validity as $\mathfrak{F},w \Vdash \f$ if and only if $w\in \llbracket \f \rrbracket$ for each interpretation $\llbracket \rrbracket$ on $\mathfrak{F}$, we can prove that morphisms preserve frame validity:
\begin{lemma}[\bfseries Preservation of validity]\label{lem:nbhdmorphisms:preservation}
Let $(f,g): (W_1,\mathbf{G_1}, \{\nu_a\}_{a \in G_1}) \to (W_2,\mathbf{G_2}, \{\nu_a\}_{a \in G_2})$ be a neighborhood $\Sigma$-frame morphism from $\mathfrak{F}_1$ to $\mathfrak{F}_2$. Then for each formula $\f$ and each $w\in W$, 
$$
\mathfrak{F}_1,w \Vdash \f \ \ \ \Rightarrow \ \ \mathfrak{F}_2,f(w) \Vdash \f.
$$
\end{lemma}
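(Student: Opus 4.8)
The natural approach is induction on the structure of $\f$. Before doing the induction, I would first establish a key preparatory fact about how the ``there'' and ``back'' conditions interact with the interpretation functions. Specifically, suppose $\llbracket \cdot \rrbracket_1$ is an interpretation on $\mathfrak{F}_1$; I want to build an interpretation $\llbracket \cdot \rrbracket_2$ on $\mathfrak{F}_2$ together with an inductive hypothesis of the form $f^{-1}[\llbracket \f \rrbracket_2] = \llbracket \f \rrbracket_1$ (equivalently, $w \in \llbracket \f \rrbracket_1 \iff f(w) \in \llbracket \f \rrbracket_2$), from which the statement $\mathfrak{F}_1, w \Vdash \f \Rightarrow \mathfrak{F}_2, f(w) \Vdash \f$ follows by quantifying over all interpretations $\llbracket \cdot \rrbracket_2$ on $\mathfrak{F}_2$ and pulling them back along $(f,g)$. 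Concretely: given an arbitrary interpretation $\llbracket \cdot \rrbracket_2$ on $\mathfrak{F}_2$, define $\llbracket \mathtt{p} \rrbracket_1 := f^{-1}[\llbracket \mathtt{p} \rrbracket_2]$ on propositional variables and $\llbracket \mathtt{a} \rrbracket_1 := $ the element of $\mathbf{G}_1$ assigned by... — here one must be slightly careful, since $g$ need not be injective or surjective. The cleaner route is: fix $\llbracket \cdot \rrbracket_1$ on $\mathfrak{F}_1$ arbitrarily (determined by its values on variables), and show that for \emph{every} interpretation $\llbracket \cdot \rrbracket_2$ on $\mathfrak{F}_2$ with $\llbracket \mathtt{p} \rrbracket_1 = f^{-1}[\llbracket \mathtt{p}\rrbracket_2]$ for all propositional variables and $g(\llbracket \mathtt{a}\rrbracket_1) = \llbracket \mathtt{a} \rrbracket_2$ for all group variables, the identity $w \in \llbracket \f \rrbracket_1 \iff f(w) \in \llbracket \f \rrbracket_2$ holds for all $\f$; and then observe such an $\llbracket\cdot\rrbracket_2$ always exists (define $\llbracket\mathtt{p}\rrbracket_2$ to be any set whose $f$-preimage is $\llbracket\mathtt{p}\rrbracket_1$ — e.g. $f[\llbracket\mathtt{p}\rrbracket_1]$ won't do in general, so instead run the argument the other way). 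Given the subtlety, I would actually phrase the lemma's proof starting from an interpretation $\llbracket\cdot\rrbracket_2$ on $\mathfrak{F}_2$ and defining the pullback $\llbracket\cdot\rrbracket_1$ on $\mathfrak{F}_1$ by $\llbracket\mathtt{p}\rrbracket_1 = f^{-1}[\llbracket\mathtt{p}\rrbracket_2]$ and $\llbracket\mathtt{a}\rrbracket_1 = $ any preimage is not needed — rather, note that $\mathfrak{F}_1, w\Vdash\f$ quantifies over \emph{all} interpretations on $\mathfrak{F}_1$, so it suffices to show: for each interpretation $\llbracket\cdot\rrbracket_2$ on $\mathfrak{F}_2$ there is an interpretation $\llbracket\cdot\rrbracket_1$ on $\mathfrak{F}_1$ with $f^{-1}[\llbracket\f\rrbracket_2] = \llbracket\f\rrbracket_1$ for all $\f$; then if $w\notin\llbracket\f\rrbracket_1$ for the witnessing interpretation, $f(w)\notin\llbracket\f\rrbracket_2$, contrapositively giving the claim.

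\textbf{The induction.} First I note that since $g: \mathbf{G}_1 \to \mathbf{G}_2$ is a $\Sigma$-algebra homomorphism and $\llbracket\cdot\rrbracket_1, \llbracket\cdot\rrbracket_2$ are term-homomorphisms, we get $g(\llbracket\p\rrbracket_1) = \llbracket\p\rrbracket_2$ for every $\Sigma$-term $\p$ (by a trivial sub-induction on terms, using compatibility of $g$ with the operators). The propositional base case and the Boolean cases ($\neg$, $\land$) are immediate from $f^{-1}$ commuting with complement and intersection. The two modal cases are the heart of the matter. For $\B{\p}\f$: I must show $w \in \llbracket\B{\p}\f\rrbracket_1 \iff f(w) \in \llbracket\B{\p}\f\rrbracket_2$, i.e. $(\forall X \in \nu^1_{\llbracket\p\rrbracket_1}(w))\, X \subseteq \llbracket\f\rrbracket_1$ iff $(\forall Y \in \nu^2_{\llbracket\p\rrbracket_2}(f(w)))\, Y \subseteq \llbracket\f\rrbracket_2$. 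For the forward direction, take $Y \in \nu^2_{g(\llbracket\p\rrbracket_1)}(f(w))$; by ``back'' there is $X \in \nu^1_{\llbracket\p\rrbracket_1}(w)$ with $f[X] = Y$; by IH $X \subseteq \llbracket\f\rrbracket_1$ means $f[X] \subseteq \llbracket\f\rrbracket_2$, i.e. $Y \subseteq \llbracket\f\rrbracket_2$. For the backward direction, take $X \in \nu^1_{\llbracket\p\rrbracket_1}(w)$; by ``there'', $f[X] \in \nu^2_{g(\llbracket\p\rrbracket_1)}(f(w))$, so by hypothesis $f[X] \subseteq \llbracket\f\rrbracket_2$; then $X \subseteq f^{-1}[f[X]] \subseteq f^{-1}[\llbracket\f\rrbracket_2] = \llbracket\f\rrbracket_1$ by IH. For $\DB{\p}\f$ the argument is dual: the ``there'' condition handles one direction (witness $X$ maps to witness $f[X]$, with $f[X]\subseteq\llbracket\f\rrbracket_2$ following from $X\subseteq\llbracket\f\rrbracket_1$), and ``back'' handles the other (witness $Y$ pulls back to some $X$ with $f[X]=Y$, and $Y\subseteq\llbracket\f\rrbracket_2 = $ ... needs $X\subseteq\llbracket\f\rrbracket_1$, which follows since $X\subseteq f^{-1}[f[X]]=f^{-1}[Y]\subseteq f^{-1}[\llbracket\f\rrbracket_2]=\llbracket\f\rrbracket_1$).

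\textbf{Main obstacle.} The only genuinely delicate point is the set-up: making precise the correspondence between interpretations on the two frames and getting the quantifier structure of ``frame validity'' right, given that $g$ and $f$ need be neither injective nor surjective. The asymmetry $X \subseteq f^{-1}[f[X]]$ (an inclusion, not an equality, when $f$ is not injective) is exactly why the ``back'' condition is stated with an existential and why the monotone (core-neighborhood) reading of the modalities — where $X \subseteq \llbracket\f\rrbracket$ rather than $X = \llbracket\f\rrbracket$ — is what makes the induction go through; I would flag this explicitly. Everything else is the standard bounded-morphism computation, here decorated with the algebra homomorphism $g$ on the group sort, and I would present the two modal cases in full and abbreviate the Boolean cases.
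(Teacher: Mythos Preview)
Your approach matches the paper's: fix interpretations on the two frames satisfying $g(\llbracket\mathtt{a}\rrbracket_1) = \llbracket\mathtt{a}\rrbracket_2$ for group variables and $w \in \llbracket\mathtt{p}\rrbracket_1 \Leftrightarrow f(w) \in \llbracket\mathtt{p}\rrbracket_2$ for propositional variables, then establish $w \in \llbracket\f\rrbracket_1 \Leftrightarrow f(w) \in \llbracket\f\rrbracket_2$ by induction on $\f$. The paper's appendix proof is a two-sentence sketch stating exactly this plan and declaring the induction routine; you go further and actually carry out the $\B{\p}$ and $\DB{\p}$ cases from the ``there'' and ``back'' clauses, which the paper omits.

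The obstacle you circle in your set-up --- non-surjectivity of $g$ --- is real, and neither your proposal nor the paper's sketch resolves it. To derive $\mathfrak{F}_2, f(w) \Vdash \f$ from $\mathfrak{F}_1, w \Vdash \f$ one must, for an \emph{arbitrary} interpretation on $\mathfrak{F}_2$, manufacture a matching interpretation on $\mathfrak{F}_1$; on the group sort this needs every value in $G_2$ to have a $g$-preimage. Without that the lemma as stated fails: over one-point world-sets with $\Sigma = \emptyset$, take $G_1 = \{a_1\}$, $G_2 = \{b_1,b_2\}$, $g(a_1)=b_1$, $\nu^1_{a_1}\equiv\nu^2_{b_1}\equiv\emptyset$ but $\nu^2_{b_2}(v)=\{\{v\}\}$; then $(f,g)$ is a morphism and $\neg\DB{\mathtt{a}}\top$ is frame-valid at the unique world of $\mathfrak{F}_1$ but not at its image in $\mathfrak{F}_2$. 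So your diagnosis that this is \emph{the} delicate point is correct; the paper simply posits that interpretations satisfying its conditions (i) and (ii) are ``fixed'' and does not address their existence. The clean fix is to assume $g$ surjective (just as global frame validity in ordinary modal logic needs $f$ surjective on worlds, here local frame validity already needs surjectivity on the group sort).
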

%
%
A proof-sketch can be found in the Appendix~\ref{ap:preservation}. For the sake of interest we also spell out in Appendix~\ref{ap:bisimulations} what bisimulations of neighborhood $\Sigma$-frames look like.

For a relational $\Sigma$-frame $\mathfrak{F} = (W,R,\mathbf{G})$, we can define the corresponding neighborhood $\Sigma$-frame $\mathfrak{F}^n = (W,\mathbf{G},\{\nu^n_a\}_{a \in G})$ putting 
$
\nu^n_a(w) = \{r(w) \mid r\in a(w)\}. 
$
Conversely, for a neighborhood $\Sigma$-frame $\mathfrak{F} = (W,\mathbf{G},\{\nu_a\}_{a \in G})$ we define the corresponding relational $\Sigma$-frame $\mathfrak{F}^r = (W,R^r,\mathbf{G})$ by 
$
a^r(w) = \{r \mid r(w)\in \nu_a(w)\}, \ R^r = \bigcup_{a\in G,w\in W} a^r(w). 
$
We then obtain the following: 
\begin{theorem}[\bfseries Categorial equivalence]\label{thm:categorialequivalence}
The categories of relational $\Sigma$-frames and neighborhood $\Sigma$-frames are equivalent. 
\end{theorem}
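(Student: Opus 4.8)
The plan is to exhibit explicit functors between the two categories and natural isomorphisms witnessing the equivalence, reusing the back-and-forth constructions $(-)^n$ and $(-)^r$ already introduced. On objects, the functor $\Phi$ sends a relational $\Sigma$-frame $\mathfrak{F} = (W, R, \mathbf{G})$ to $\mathfrak{F}^n = (W, \mathbf{G}, \{\nu^n_a\}_{a\in G})$ with $\nu^n_a(w) = \{r(w) \mid r \in a(w)\}$, and the functor $\Psi$ sends a neighborhood $\Sigma$-frame $\mathfrak{F} = (W, \mathbf{G}, \{\nu_a\}_{a\in G})$ to $\mathfrak{F}^r$ as defined in the excerpt. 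On morphisms, $\Phi$ leaves the pair $(f,g)$ essentially unchanged: I would check that a relational $\Sigma$-frame morphism $(f,g)$ (in the sense one reads off from Definition~\ref{def:morph} applied to complex algebras, or more directly a bounded-morphism-style condition between relational frames) satisfies the (there) and (back) conditions of Definition~\ref{def:nbhdmorphisms} with respect to the induced neighborhood functions; $\Psi$ acts analogously in reverse. The key computation is that the complex algebras agree: $(\mathfrak{F}^n)^+ = \mathfrak{F}^+$ on the nose for a relational frame, since $\forall X \in \nu^n_a(w)\, X \subseteq P$ unfolds to $\forall r \in a(w)\, r(w) \subseteq P$, matching Definition~\ref{def:complex}, and dually for $\DB{\,}$; this is what guarantees that the functors preserve and reflect frame validity and interact correctly with interpretations.

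Next I would establish the two natural isomorphisms. For $\Psi\circ\Phi$, starting from a relational frame $\mathfrak{F} = (W,R,\mathbf{G})$ one computes $(\mathfrak{F}^n)^r$; its world set and $\Sigma$-algebra of group intensions are unchanged, but its relation set becomes $R^r = \bigcup_{a,w} a^r(w)$ where $a^r(w) = \{ r \mid r(w) \in \nu^n_a(w)\} = \{ r \mid \exists s \in a(w) : r(w) = s(w)\}$. This is in general larger than the original $R$ (it closes $R$ under "having the same successor set at some world where a group picks it out"), so $\Psi\circ\Phi$ is not the identity on objects — hence the equivalence is not an isomorphism of categories. The natural transformation component at $\mathfrak{F}$ would be the pair $(\mathrm{id}_W, \mathrm{id}_{\mathbf{G}})$, and the work is to verify this is an iso in the category of relational $\Sigma$-frames, i.e.\ that the morphism conditions hold in both directions despite the change in the relation set; the crucial point is that what matters for the semantics, and for morphisms, is only the family of successor-sets $\{r(w) \mid r \in a(w)\}$, which is preserved. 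For $\Phi\circ\Psi$, starting from a neighborhood frame one checks $\nu^n_{a}$ computed in $(\mathfrak{F}^r)^n$ equals $\{r(w) \mid r(w)\in\nu_a(w)\} = \nu_a(w)$, so here $\Phi\circ\Psi$ actually is the identity on objects, and the natural isomorphism is the identity natural transformation. Functoriality of $\Phi$ and $\Psi$ (preservation of identities and composition) is routine once the morphism translations are fixed.

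The main obstacle is pinning down the right notion of morphism on the relational side and checking the back-and-forth/bounded-morphism conditions survive the translation, particularly across the non-identity isomorphism $\Psi\circ\Phi$. Concretely, one must argue that a relational $\Sigma$-frame morphism is exactly the data of a $\Sigma$-algebra homomorphism $g$ together with a map $f$ on worlds satisfying a "there/back" condition phrased via successor-sets $\{r(w)\}$ rather than via individual relations $r$ — otherwise the correspondence with Definition~\ref{def:nbhdmorphisms} fails, and the functor would not be well-defined on morphisms. I would therefore state the relational morphism condition in successor-set form (forward: if $r\in a(w)$ then there is $r' \in g(a)(f(w))$ with $f[r(w)] = r'(f(w))$; back: the converse), verify this matches the $(-)^+$ dual morphism notion of Definition~\ref{def:morph}, and then the translations become transparent. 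Everything else — the naturality squares, functoriality, and the fact that the isomorphism components $(\mathrm{id},\mathrm{id})$ are genuine isos — reduces to the observation that neither construction disturbs $W$, $\mathbf{G}$, or the family of neighborhood/successor sets, only the auxiliary set $R$ of relation-names.
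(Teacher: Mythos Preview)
Your proposal is correct and follows essentially the same route as the paper: define the object maps via $(-)^n$ and $(-)^r$, phrase relational $\Sigma$-frame morphisms in successor-set form (forward: $r\in a(w)$ implies some $r'\in g(a)(f(w))$ with $f[r(w)]=r'(f(w))$; back: the converse) so that they coincide with the neighborhood (there)/(back) conditions of Definition~\ref{def:nbhdmorphisms}, and then observe that the round trip on the neighborhood side is the identity while on the relational side it only returns a frame isomorphic to the original via $(\mathrm{id}_W,\mathrm{id}_{\mathbf{G}})$. You are in fact more explicit than the paper about functoriality and the naturality squares, and more precise about how $a(w)$ (not only $R$) is enlarged under $\Psi\circ\Phi$; the paper's proof is content to note that the resulting frame is isomorphic ``in terms of frame morphisms'' without spelling this out.
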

Given the completeness of the basic $\Sigma$-logic with respect to relational $\Sigma$-frames (Theorem~\ref{thm:completeness}), completeness with respect to all neighborhood $\Sigma$-frames follows\footnote{It is also possible to define a canonical neighborhood $\Sigma$-frame directly, following similar pattern as in Definition~\ref{def:ultra}.}
With the complex algebra/ultrafilter frame construction at hand, we can describe the algebraic duality, and obtain a definability theorem characterizing modally definable classes of neighborhood $\Sigma$-frames (cf. Theorem 2 of \cite{BilkovaEtAl2021}). 

\section{Special cases}\label{sec:cases}
To illustrate some interesting special cases of the general framework discussed above, we introduce, in each case, a class of relational frames that captures some natural kind of structure imposed on intensional groups, provide an algebraic generalization of relational frames, and show that the respective classes of relational frames and their algebraic generalizations determine the same logic. 
\subsection{Unions and join-semilattices}\label{sec:cases-JS}
One of the simplest forms of structure imposed on groups of agents corresponds to taking unions of sets of agents. On the intensional perspective, taking unions corresponds to an operation on intensional groups that, for each world $w$, gives the union of the extensions of the given intensional groups in $w$. It is then natural to impose a \emph{semilattice} structure on the set of intensional groups, where the neutral element is an ``inconsistent'' intensional group that has an empty extension in each world. This case is also easily handled in the technical sense, and so we will discuss it as an introductory example. 

Nevertheless, even this simple case has an interesting feature: the ultrafilter frame construction does not in general lead to a relational frame of the right kind. This may be surprising given the fact that unions are well-behaved in the extensional framework. This feature is discussed at the end of the section. 

\begin{definition}[\bfseries JS-frame]\label{def:JS}
Let $\Sigma_{\mathsf{SL}} = \{ +, 0 \}$ be the join-semilattice signature. A \emph{relational join-semi\-lattice frame} (relational js-frame) is a relational $\Sigma_{\mathsf{SL}}$-frame where $0^{\mathbf{G}}(w) = \emptyset$ and $(f +^{\mathbf{G}} g)(w) = f(w) \cup g(w)$.
%
%
A \emph{join-semilattice frame} (js-frame) is a $\Sigma_{\mathsf{SL}}$-frame where $\mathbf{G}$ is a join semilattice and 
%
\begin{multicols}{2}
    \noindent
    \begin{align}
    \B{0}x & = \top \label{js1}\\
    \DB{0}x & = \bot \label{js2}        
    \end{align}
    \begin{align}
    \B{a + b}x &= \B{a}x \land \B{b}x \label{js3}\\
    \DB{a + b}x &= \DB{a}x \lor \DB{b}x \label{js4}
\end{align}    
\end{multicols}
The class of (relational) js-frames will be denoted as $\mathsf{FSL}$ ($\mathsf{rFSL}$). 
\end{definition}
\begin{definition}[\bfseries JS-logic]
The \emph{join-semilattice logic} $LSL$ is the smallest $\Sigma_{\mathsf{SL}}$-logic that contains all formulas of the following forms:
%
\begin{multicols}{2}
    \noindent
    \begin{align}
    \top & \to \B{0}\f\tag{a\ref{js1}} \label{ax-js1}\\
    \DB{0}\f & \to \bot\tag{a\ref{js2}} \label{ax-js2}        
    \end{align}
    \begin{align}
    \B{\p + \pp}\f & \tot \B{\p}\f \land \B{\pp}\f\tag{a\ref{js3}}\label{ax-js3}\\
    \DB{\p + \pp}\f & \tot \DB{\p}\f \lor \DB{\pp}\f\tag{a\ref{js4}} \label{ax-js4}
\end{align}    
\end{multicols}
\end{definition}
\begin{theorem}\label{thm:complet-js}
(1) $\f \in LSL$\ \ iff\ \ (2) $\f \in Log(\mathsf{rFSL})$\ \ iff\ \ (3) $ (\top \approx \f) \in Eq(\mathsf{FSL})$.
\end{theorem}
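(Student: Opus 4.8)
\textbf{Proof plan for Theorem~\ref{thm:complet-js}.}

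The plan is to prove the cycle $(1) \Rightarrow (2) \Rightarrow (3) \Rightarrow (1)$, reusing as much of the general machinery (Theorem~\ref{thm:completeness}, Lemma~\ref{lem:canmorph}, Example~\ref{ex:canframe-basic}) as possible. The implication $(1) \Rightarrow (2)$ is soundness: one checks that in every relational js-frame the complex algebra validates the four axiom schemes \eqref{ax-js1}--\eqref{ax-js4}. This is a direct computation from Definition~\ref{def:complex} using $0^{\mathbf{G}}(w) = \emptyset$ and $(f +^{\mathbf{G}} g)(w) = f(w) \cup g(w)$: for instance $\B{0}^{+}P = \{w \mid \forall r \in \emptyset : r(w) \subseteq P\} = W$, and the $+$ clauses follow because a universal (resp. existential) quantifier over $f(w) \cup g(w)$ splits as a conjunction (resp. disjunction) of the quantifiers over $f(w)$ and $g(w)$. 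Since the basic $\Sigma_{\mathsf{SL}}$-logic is already sound for relational frames by Theorem~\ref{thm:completeness}, and these four schemes are valid too, all of $LSL$ is valid on $\mathsf{rFSL}$.

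The implication $(2) \Rightarrow (3)$ is essentially a translation between the formula-validity formulation and the equational formulation: a $\Sigma$-formula $\f$ is valid in a relational model iff the equation $\top \approx \f$ holds there, and js-frames ($\mathsf{FSL}$) in the sense of Definition~\ref{def:JS} are exactly the $\Sigma_{\mathsf{SL}}$-frames whose complex-algebra-style axioms \eqref{js1}--\eqref{js4} hold; since every relational js-frame $\mathfrak{F}$ gives rise to a js-frame $\mathfrak{F}^{+}$ validating \eqref{js1}--\eqref{js4} (again by the computation in the previous paragraph), $Log(\mathsf{rFSL}) \subseteq \{\f \mid \top \approx \f \in Eq(\mathsf{FSL})\}$. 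Conversely any $\Sigma_{\mathsf{SL}}$-frame equation valid on all js-frames is in particular valid on all complex algebras of relational js-frames, hence $\f \in Log(\mathsf{rFSL})$; so the two sets coincide.

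The implication $(3) \Rightarrow (1)$ is the substantive direction and where the main obstacle lies. The strategy is the canonical-model argument: suppose $\f \notin LSL$; build the basic canonical $LSL$-frame $\mathfrak{B}^{LSL}$ as in Example~\ref{ex:canframe-basic} (this is a $\Sigma_{\mathsf{SL}}$-frame, and because $LSL$ contains \eqref{ax-js1}--\eqref{ax-js4} it in fact satisfies the defining conditions of a js-frame once we check that the term algebra $\mathbf{G}^{LSL}$ can be quotiented to a genuine join-semilattice — note the caveat in Example~\ref{ex:canF...} is that $\mathbf{G}^{L}$ is only a term algebra, so one must pass to the quotient by the group-congruence generated by associativity, commutativity, idempotence of $+$ and neutrality of $0$, and verify that $\B{\,}^{LSL}$, $\DB{\,}^{LSL}$ are still well defined on these classes, which is exactly what \eqref{ax-js3}--\eqref{ax-js4} guarantee). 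Then $\f \approx \top$ fails in $\mathfrak{B}^{LSL} \in \mathsf{FSL}$, giving $(3) \Rightarrow \neg$, i.e. contrapositively $(3) \Rightarrow (1)$. The genuinely delicate point — flagged already in the prose preceding the theorem (``the ultrafilter frame construction does not in general lead to a relational frame of the right kind'') — is that to close the loop through relational frames one would want $(\mathfrak{B}^{LSL})_{+}$ to be a \emph{relational js-frame}, but it need not satisfy $0^{\mathbf{G}}(w) = \emptyset$ and pointwise-union on the nose. I would therefore route $(3) \Rightarrow (1)$ directly through the algebraic js-frame $\mathfrak{B}^{LSL}$ as above rather than through $\mathsf{rFSL}$, using only that $\f \in LSL$ iff $\f \approx \top$ is valid in $\mathfrak{B}^{LSL}$ and that $\mathfrak{B}^{LSL}$ belongs to $\mathsf{FSL}$; the equivalence of $(2)$ and $(3)$ established above then bridges back to relational frames. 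The main work, and the main obstacle, is thus the verification that the quotiented canonical structure really is a js-frame — in particular that no unwanted collapse occurs on the propositional side (Lemma~\ref{lem:canmorph}, via the Prime Filter Theorem, handles exactly this) and that the group quotient interacts correctly with the modalities.
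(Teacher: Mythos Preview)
Your argument for $(2)\Rightarrow(3)$ is where the plan breaks. You write that ``since every relational js-frame $\mathfrak{F}$ gives rise to a js-frame $\mathfrak{F}^{+}$ validating \eqref{js1}--\eqref{js4}, $Log(\mathsf{rFSL}) \subseteq \{\f \mid \top \approx \f \in Eq(\mathsf{FSL})\}$'', but this inference runs the wrong way. That $\mathfrak{F}^{+}\in\mathsf{FSL}$ whenever $\mathfrak{F}\in\mathsf{rFSL}$ only tells you that anything valid on all of $\mathsf{FSL}$ is in particular valid on complex algebras of relational js-frames, hence on $\mathsf{rFSL}$; that is the inclusion $(3)\Rightarrow(2)$, which you then correctly restate as the ``converse''. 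The direction you actually need --- that validity on all relational js-frames forces validity on \emph{all} js-frames --- would require every js-frame to be equivalent (for validity purposes) to the complex algebra of some relational js-frame, and that is a genuine representation step you cannot get for free. As a result your cycle never closes: from $(1)\Rightarrow(2)$, $(3)\Rightarrow(2)$ and $(3)\Rightarrow(1)$ there is no path out of $(2)$.

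You correctly sense that the canonicity failure is the obstacle, but you have located it in the wrong implication. In the paper the real work sits precisely in $(2)\Rightarrow(3)$: given a js-frame $\mathfrak{A}$ and an evaluation $e$ refuting $\f$, one does \emph{not} take the general ultrafilter frame $\mathfrak{A}_{+}$ of Definition~\ref{def:ultra} (which need not be a relational js-frame --- the footnote about \eqref{ax-js4} explains why $+$ does not come out as pointwise union), but instead builds a bespoke relational frame on $\mathrm{Uf}(\mathbf{F})$ in which the group algebra is \emph{forced} to satisfy $H(0)(u)=\emptyset$ and $H(\p+\pp)(u)=H(\p)(u)\cup H(\pp)(u)$ by definition, and then verifies by induction on the structure of the group term $\ppp$ (using axioms \eqref{js1}--\eqref{js4} at the inductive steps) that $V(\chi)=\widehat{e(\chi)}$, $V(\p)=H(\p)$ is still a model interpretation. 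Your treatment of $(3)\Rightarrow(1)$ via a quotiented canonical frame $\mathfrak{B}^{LSL}$ is fine and matches the paper's argument for that direction, but it does not substitute for the missing representation argument, and your final sentence (``the equivalence of $(2)$ and $(3)$ established above then bridges back'') appeals to exactly the equivalence you have not proved.
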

\begin{proofsketch}
(1) implies (2) since the $LSL$ axioms are valid in all relational js-frames. The fact that (2) implies (3) is established by showing that for each js-frame there is an equivalent relational js-frame. We cannot use the ultrafilter frame construction (Def.~\ref{def:ultra}) for failure of canonicity\footnote{Axiom \ref{ax-js4} $\DB{\p + \pp}\f \tot \DB{\p}\f \lor \DB{\pp}\f$ does not correspond to the condition $(f +^{\mathbf{G}} g)(w) = f(w) \cup g(w)$, but to the following one:
$
\forall w \ (\forall r\in (a+b)(w) \ \exists s\in a(w)\cup b(w) \ (s(w)\subseteq r(w))) \wedge \forall w \ (\forall s\in a(w)\cup b(w)  \ \exists r\in (a+b)(w) \ (r(w)\subseteq s(w))).
$
While the first conjunct is valid on an ultrafilter frame, the second one is not (unless we deal with an ultrafilter frame of a complete algebra).
For the $\Sigma_{\mathsf{SL}}$-neighborhood frames, \ref{ax-js4} corresponds to the property:
$
\forall w (\nu_{a+b}(w)^{\uparrow} = \nu_{a}(w)^{\uparrow} \cup \nu_{b}(w)^{\uparrow}).
$}. However, a variant of the construction where $0$ and $+$ are defined exactly as in relational js-frames will do. That (3) implies (1) is established by contraposition, using a variant of the basic canonical $L$-frame of Example \ref{ex:canframe-basic}. Details are given in Appendix \ref{ap:semilattices}.
\qed
%
\end{proofsketch}
%
\subsection{Meta-belief and right-unital magmas}\label{sec:cases-M}

Information about \emph{meta-beliefs} (``$i$ believes that $j$ believes that $p$'') is crucial to many multi-agent scenarios. The notion of meta-belief is often lifted to extensional groups of agents (sets). ``Group $I$ believes that group $J$ believes that $p$'' means that every agent in $I$ believes that every agent in $J$ believes that $p$. It is interesting to note that, if agents are seen as accessibility relations, the notion of meta-belief induces structure on sets of agents. In particular, every agent in $I$ believes that every agent in $J$ believes that $p$, iff every world accessible via $I \circ J = \{ r \circ q \mid r \in I \And q \in J \}$ satisfies $p$. If the ``environment'' agent $E = \{ \mathrm{id}_W \}$ is also included, we obtain a monoid structure. It is interesting to look at the notion of meta-belief, and the structure it induces, in the context of intensional groups. 

\begin{example}\label{ex:comp}
Adam ($\mathcal{A}$) is reviewing a paper for a journal, double-blind. Adam knows the researchers active in the particular area very well, and so he knows that either Bonnie ($\mathcal{B}$) or Carrie ($\mathcal{C}$) is the author or they are co-authoring the paper together. He knows that the authors of the paper, whoever they are, believe that the proof of a particular statement in the paper is correct ($p$), although Adam believes it is incorrect ($\neg p$). In reality, Bonnie and Carrie co-authored the paper and the proof is correct. 

The scenario is represented by the relational model in Figure \ref{fig:comp}, with the actual world underlined. Adam's meta-beliefs concerning the authors of the paper are represented by the result of composing his relation with relations that ``behave like'' a relation corresponding to an author of the paper in any world accessible for Adam from the actual world. In particular, from the world $(\{ \mathcal{C} \}, \neg p)$, representing the situation where only Carrie is the author and the proof is incorrect, only the $\mathcal{C}$-arrow is followed, and similarly for the world $(\{ \mathcal{B} \}, \neg p)$ and the $\mathcal{B}$ arrow. This makes sense: beliefs of people who are not authors in the given world are disregarded. In the world $(\{ \mathcal{B}, \mathcal{C} \}, \neg p)$, one could follow either $\mathcal{B}$ or $\mathcal{C}$, but the difference is not reflected by the accessibility arrows leading from that world.
\end{example}

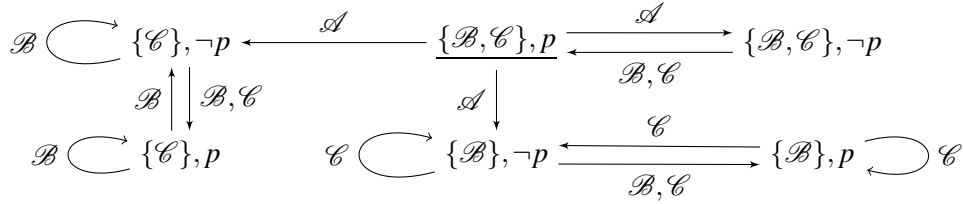
\begin{figure}\centering
\begin{tikzpicture}
\node at (-4.2,0) (u0) {$\{ \mathcal{C} \}, p$};
\node at (0,0) (u1) {$\{ \mathcal{B} \}, \neg p$};
\node at (4.2,0) (u2) {$\{ \mathcal{B} \}, p$};
\node at (-4.2,1.5) (w0) {$\{ \mathcal{C} \}, \neg p$};
\node at (0,1.5) (w1) {\underline{$\{ \mathcal{B}, \mathcal{C} \}, p$}};
\node at (4.2,1.5) (w2) {$\{ \mathcal{B}, \mathcal{C} \}, \neg p$};

\path[-latex']
(w1) edge node[left] {$\mathcal{A}$} (u1)
(w1.8) edge node[above] {$\mathcal{A}$} (w2.173)
(w1) edge node[above] {$\mathcal{A}$} (w0)
(w2.-173) edge node[below] {$\mathcal{B}, \mathcal{C}$} (w1.-8)
(u1.-8) edge node[below] {$\mathcal{B}, \mathcal{C}$} (u2.-170)
(u2.171) edge node[above] {$\mathcal{C}$} (u1.9)
(u1) edge[loop left] node[left] {$\mathcal{C}$} (u1)
(u2) edge[loop right] node[right] {$\mathcal{C}$} (u2)
(w0.-70) edge node[right] {$\mathcal{B}, \mathcal{C}$} (u0.70)
(u0.110) edge node[left] {$\mathcal{B}$} (w0.-110)
(u0) edge[loop left] node[left] {$\mathcal{B}$} (u0)
(w0) edge[loop left] node[left] {$\mathcal{B}$} (w0);
\end{tikzpicture}\caption{A relational model corresponding to Example \ref{ex:comp}.}\label{fig:comp}
\end{figure}

%

Let $W$ be a set and let $R \subseteq (2^{W})^{W}$. 
 Let $f \in (2^{R})^{W}$ be an intensional group. A \emph{variant} of $f$ is a relation $r \in (2^{W})^{W}$ such that, for each $w \in W$, if $f(w) \neq \emptyset$, then there is $q \in f(w)$ such that $r(w) = q(w)$, and if $f(w) = \emptyset$, then $r(w) = \emptyset$. We denote the set of all variants of $f$ as $f^{\dagger}$. 

Intuitively, a variant of an intensional group $f$ is a relation that ``behaves like'' some relation in $f(w)$ whenever $f(w)$ is non-empty (not necessarily the same relation!) and which is ``blind'' in $w$ where $f(w)$ is empty. 

%

\begin{definition}[\bfseries Intensional composition]
Let $R$ be a set of binary relations on a set $W$. We define the operation $\otimes : (2^{R})^{W} \times (2^{R})^{W} \to (2^{R})^{W}$ point-wise by $(f \otimes g)(w) = f(w) \circ g^{\dagger}$.
\end{definition}
It can be shown that the structure on the set of intensional groups induced by intensional composition is rather weak. For instance, the natural candidate for the unit element, the function $1$ that maps each $w$ to $\{ \mathrm{id}_W \}$ is the right unit, but not the left unit: $(f \otimes 1)(w) = f(w) \circ 1^{\dagger} = f(w) \circ \{ \mathrm{id} _W\} = f(w)$, but in general $(1 \otimes f)(w) = 1(w) \circ f^{\dagger} = f^{\dagger} \neq f(w)$. (We note that in general $f^{\dagger} \neq f(w)$ since we can have $r \in f(w)$ and $r(v) \neq \emptyset$ for some $v \neq w$ such that $f(v) = \emptyset$.)

A \emph{right-unital magma} (rum) is an algebra $(M, \cdot, 1)$ where $\cdot$ is a binary operation on $M$ and $1 \in M$ such that $x \cdot 1 = x$ for all $x \in M$.

\begin{definition}[\bfseries Rum-frame]
Let $\Sigma_{\mathsf{M}} = \{ \cdot, 1 \}$ be the monoid signature. A \emph{relational right-unital-magma frame} (a relational rum-frame) is a relational $\Sigma_{\mathsf{M}}$-frame such that $1^{\mathbf{G}}(w) = \{ \mathrm{id}_W \}$ and $(g \cdot^{\mathbf{G}} h)(w) = (g \otimes h)(w)$.
A \emph{rum-frame} is a $\Sigma_\mathsf{M}$-frame where $\mathbf{G}$ is a right-unital-magma and
\begin{multicols}{2}
    \noindent
    \begin{align}
    \B{1}x & = x \label{mon1}\\
    \DB{1} x & = x\label{mon2}        
    \end{align}
    \begin{align}
    \B{a \cdot b}x & = \B{a} \B{b}x \label{mon3}\\
    \DB{a \cdot b} x & = \DB{a} (\BD{b}\bot \lor \DB{b}x) \label{mon4}
\end{align}    
\end{multicols}
\end{definition}

The class of (relational) rum-frames will be denoted as $\mathsf{FRUM}$ ($\mathsf{rFRUM}$, respectively).


\begin{definition}[\bfseries Rum-logic]
The \emph{right-unital-magma logic} $LRUM$ is the smallest $\Sigma_{\mathsf{M}}$-logic that contains all formulas of the following forms:
\begin{multicols}{2}
\noindent
\begin{align}
\B{1} \f & \tot \f\label{mon1-ax}\\
\DB{1}\f & \tot \f\label{mon2-ax}
\end{align}
\begin{align}
\B{\p \cdot \pp}\f & \tot \B{\p}\B{\pp}\f\label{mon3-ax}\\
\DB{\p \cdot \pp}\f & \tot \DB{\p}(\BD{\pp}\bot \lor \DB{\pp}\f)\label{mon4-ax}
\end{align}    
\end{multicols}

\end{definition}

\begin{remark}\label{rem:composition-axiom}
We note that a simpler variant of (\ref{mon3-ax}) for $\DB{\,}$, namely, $\DB{\p \cdot \pp}\f \tot \DB{\p}\DB{\pp}\f$ is not valid. In particular, the left-to-right implication has the following counterexample. Let $\llbracket \p\rrbracket(w) = \{ r \}$ and let $r(w) = \{ u, v \}$; moreover, let us assume that $\llbracket \pp\rrbracket(u) = \{ q \}$, $\llbracket \pp\rrbracket(v) = \emptyset$, $q(u) = \{ u \} = \llbracket p\rrbracket$. It is easily checked that $\llbracket \p \cdot\pp\rrbracket(w) = \{ \{ (w, u) \} \}$, and so $w \models \DB{\p \cdot \pp} p$. However, $w \not\models \DB{\p}\DB{\pp}p$, since this would require $\llbracket \pp\rrbracket(v)$ to be non-empty.  On the other hand, (\ref{mon4-ax}) is valid since worlds $v$ accessible via $\p$ where $\llbracket \pp\rrbracket(v) = \emptyset$ are taken care of by the extra disjunct $\BD{\pp}\bot$. 
\end{remark}

\begin{theorem}\label{thm:complet-mon}
$\f \in LRUM$ iff $\f \in Log(\mathsf{rFRUM})$ iff $ (\top \approx \f) \in Eq(\mathsf{FRUM})$. 
\end{theorem}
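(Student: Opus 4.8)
The plan is to prove the chain of equivalences in Theorem~\ref{thm:complet-mon} by mirroring the structure of the proof of Theorem~\ref{thm:complet-js}, i.e.\ establishing the cycle $(1)\Rightarrow(2)\Rightarrow(3)\Rightarrow(1)$. The implication $(1)\Rightarrow(2)$ is routine soundness: one checks that each of the axioms \eqref{mon1-ax}--\eqref{mon4-ax} is valid on every relational rum-frame. For \eqref{mon1-ax} and \eqref{mon2-ax} this follows from $1^{\mathbf{G}}(w)=\{\mathrm{id}_W\}$; for \eqref{mon3-ax} and \eqref{mon4-ax} one unfolds the complex-algebra clauses using $(g\cdot^{\mathbf G} h)(w)=(g\otimes h)(w)=g(w)\circ h^{\dagger}$, and the only subtle point — already flagged in Remark~\ref{rem:composition-axiom} — is that worlds $v$ reached via $\p$ at which the $\pp$-extension is empty must be absorbed by the disjunct $\BD{\pp}\bot$; this is exactly what makes \eqref{mon4-ax} the right axiom rather than the naive $\DB{\p\cdot\pp}\f\tot\DB{\p}\DB{\pp}\f$.

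For $(2)\Rightarrow(3)$ I would show that every (abstract) rum-frame has an equivalent relational rum-frame, so that validity of $\top\approx\f$ over $\mathsf{FRUM}$ reduces to validity over $\mathsf{rFRUM}$, i.e.\ to membership in $Log(\mathsf{rFRUM})$. As in the js-case, the ultrafilter frame construction of Definition~\ref{def:ultra} cannot be used directly, because the rum-axioms — in particular the composition axioms involving $\otimes$ and the $\dagger$-operation — are not preserved under that construction (the "back" direction fails unless the algebra is complete, just as for \eqref{ax-js4}). Instead I would take the ultrafilter frame of the underlying two-sorted modal reduct and then \emph{redefine} the group algebra so that $1$ and $\cdot$ are interpreted exactly as in relational rum-frames ($1(w)=\{\mathrm{id}_W\}$, $g\cdot h = g\otimes h$), checking that this does not disturb the interpretation of formulas: one needs a lemma that, on the canonical structure, $\uf{a\cdot b}$ and $\uf{a}\otimes\uf{b}$ induce the same $\B{\,}^{+}$- and $\DB{\,}^{+}$-values, which is where axioms \eqref{mon3-ax} and \eqref{mon4-ax} are used, and similarly for $1$. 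This step — verifying that the "corrected" relational frame is genuinely a rum-frame and still refutes the same formulas — is the main obstacle, precisely because the $\dagger$/empty-extension phenomenon that complicates \eqref{mon4-ax} also complicates showing that intensional composition behaves correctly on ultrafilter-indexed variants.

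Finally, $(3)\Rightarrow(1)$ I would prove by contraposition using a variant of the basic canonical $L$-frame of Example~\ref{ex:canframe-basic}: given a $\Sigma_{\mathsf{M}}$-logic $L=LRUM$ and $\f\notin L$, build the Lindenbaum--Tarski $\Sigma_{\mathsf{M}}$-frame $\mathfrak B^{L}$ whose propositional part is $Fm_{\Sigma_{\mathsf{M}}}/{\equiv_L}$ and whose group part is the term algebra modulo the equations forced by \eqref{mon1-ax}--\eqref{mon4-ax} (so that $\mathbf G^{L}$ is a right-unital magma); the $LRUM$-axioms guarantee that $\B{\,}^{L}$ and $\DB{\,}^{L}$ satisfy \eqref{mon1}--\eqref{mon4}, hence $\mathfrak B^{L}\in\mathsf{FRUM}$, and since $\f\tot\top\notin L$ the equation $\top\approx\f$ fails in it, so $(\top\approx\f)\notin Eq(\mathsf{FRUM})$. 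Routine congruence-checking and the Prime Filter Theorem close the argument; I would relegate the detailed verifications to an appendix, as is done for Theorem~\ref{thm:complet-js}.
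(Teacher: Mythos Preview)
Your proposal is correct and follows essentially the same three-step cycle as the paper's own proof: soundness for $(1)\Rightarrow(2)$; for $(2)\Rightarrow(3)$ an ultrafilter-based relational frame in which $1$ and $\cdot$ are \emph{re}-interpreted as $\{\mathrm{id}\}$ and $\otimes$ (the paper carries this out via a term-level function $H(\p)$ and an induction on $\ppp$ showing $V(\B{\ppp}\chi)=\B{H(\ppp)}V(\chi)$ and $V(\DB{\ppp}\chi)=\DB{H(\ppp)}V(\chi)$, which is exactly the ``lemma'' you describe); and for $(3)\Rightarrow(1)$ a Lindenbaum--Tarski construction with the term congruence $\p\equiv^{Tm}\pp$ iff $\B{\p}\f\tot\B{\pp}\f$ and $\DB{\p}\f\tot\DB{\pp}\f$ are in $LRUM$ for all $\f$, verifying that the quotient is a right-unital magma via $\p\cdot 1\equiv^{Tm}\p$. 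The only cosmetic divergence is that you phrase the key step of $(2)\Rightarrow(3)$ as comparing $\uf{a\cdot b}$ with $\uf{a}\otimes\uf{b}$ at the algebra level, whereas the paper works entirely at the term level; this is immaterial.
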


\subsection{Closure semilattices and distributed knowledge}\label{sec:cases-CS}

In the extensional setting, $\f$ is distributed knowledge in a group iff it is satisfied in every world accessible using the intersection of the relations in the group. If all relations are reflexive, then $\f$ is distributed knowledge iff there is a non-empty subset of the given group such that $\f$ is satisfied in all worlds accessible using the intersection. On the relations-as-agents perspective, the intersection of each non-empty subset of a set of relations-agents gives rise to a new relation-agent. Hence, forming intersections of non-empty subsets of a group $X$ transforms the group of relations-agents into a new group $X'$. Interestingly, distributed knowledge in $X$ then corresponds to the ``somebody knows'' operator applied to $X'$. Hence, distributed knowledge induces structure on groups of agents even in the extensional setting. We will look at the structure induced by distributed knowledge in the intensional setting.  

\begin{definition}[CSL-frame]
Let $\Sigma_{\mathsf{CSL}} = \{ +, 0, \,^{\cap} \}$ where $\{ +, 0 \}$ is the join-semilattice signature and $\,^{\cap}$ is a unary operator. A \emph{relational closure semilattice frame} (relational cs-frame) is a relational $\Sigma_{\mathsf{CSL}}$-frame where all $r \in R$ are reflexive and $0^{\mathbf{G}}(w) = \emptyset$, $(f +^{\mathbf{G}} g)(w) = f(w) \cup g(w)$, and  
$$ f^{\cap^{\mathbf{G}}}(w) = \{ r \in R \mid r(w) = \bigcap_{r_i \in X}  r_i(w)  \text{ for some } \emptyset \neq X \subseteq f(w) \}$$\footnote{We assume that $r \in R$ in the frame is closed under this operation.}
A \emph{closure semilattice frame} (cs-frame) is a $\Sigma_{\mathsf{CSL}}$-frame where $\mathbf{G}$ is a join semilattice with partial order defined as usual ($a \leq b$ iff $a + b = b$), $\,^{\cap}$ is a closure operator on $\mathbf{G}$, the join-semilattice axioms (\ref{js1}--\ref{js4}) are satisfied as well as $0^\cap = 0$, and 
\begin{multicols}{2}
    \noindent
    \begin{align}
    \B{a}x & \leq x\label{ref}\\
    \DB{a^{\cap}}x \land \DB{a^{\cap}}y  & \leq \DB{a^{\cap}}(x \land y)\label{cs2}        
    \end{align}
    \begin{align}
    \B{a^{\cap}}x & = \B{a}x\label{cs3}\\
    \DB{a^{\cap}}x & \leq \DB{a}\top\label{cs4}        
    \end{align}
\end{multicols}
The class of (relational) cs-frames is denoted as $\mathsf{FCS}$ ($\mathsf{rFCS}$).
\end{definition}
%
%
\begin{definition}[\bfseries CS-logic]
The \emph{closure semilattice logic} $LCS$ is the smallest $\Sigma_{\mathsf{CSL}}$-logic that extends $LSL$ and contains all formulas of the following forms: 
\begin{multicols}{2}
  \noindent
  \begin{align}
    \B{\p}\f & \to \f\label{ref-ax}\\
    \DB{\p^{\cap}}\f \land \DB{\p^{\cap}}\ff & \to \DB{\p^{\cap}}(\f \land \ff)\label{cs2-ax}\\
    \DB{\p}\f & \to \DB{\p^{\cap}}\f \label{cs5-ax}
  \end{align}
  \begin{align}
    \B{\p^{\cap}}\f & \tot \B{\p}\f\label{cs3-ax}\\
    \DB{\p^{\cap}}\f & \to \DB{\p}\top\label{cs4-ax}\\
    \DB{\p^{\cap\cap}}\f & \to \DB{\p^{\cap}}\f \label{cs6-ax}
\end{align}
\end{multicols}
and closed under the rule 
$$\dfrac{\DB{\p}\f \to \DB{\pp}\f}{\DB{\p^{\cap}}\f \to \DB{\pp^{\cap}}\f}.$$

\end{definition}

\begin{theorem}\label{thm:complet-cs}
$\f \in LCS$ iff $\f \in Log(\mathsf{rFCS})$ iff $ (\top \approx \f) \in Eq(\mathsf{FCS})$.
\end{theorem}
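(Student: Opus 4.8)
The plan is to run the three-step cycle already used for Theorems~\ref{thm:complet-js} and \ref{thm:complet-mon} — $(1)\Rightarrow(2)\Rightarrow(3)\Rightarrow(1)$ — exploiting that $LCS$ extends $LSL$ and that $\mathsf{FCS}$, $\mathsf{rFCS}$ refine $\mathsf{FSL}$, $\mathsf{rFSL}$. Thus $(1)\Rightarrow(2)$ by soundness of the $LCS$-axioms on relational cs-frames; $(2)\Rightarrow(3)$ by showing every cs-frame has an equivalent relational cs-frame; and $(3)\Rightarrow(1)$ by contraposition using a variant of the basic canonical $L$-frame of Example~\ref{ex:canframe-basic}.

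For $(1)\Rightarrow(2)$ it suffices, by Theorem~\ref{thm:complet-js}, to verify the new axioms and the closure rule in an arbitrary relational cs-frame. Axiom \ref{ref-ax} reflects reflexivity of the relations in $R$. Axiom \ref{cs3-ax} holds because $f^{\cap^{\mathbf G}}(w)$ only adds relations whose image at $w$ is an intersection of a non-empty subfamily of $f(w)$, and such an intersection lies inside a set $P$ exactly when every member of $f(w)$ does, so the $\B$-reach is unchanged; \ref{cs4-ax} and \ref{cs5-ax} follow by noting that a witness for $\DB{\p^\cap}$ forces $f(w)$ non-empty and that singletons $\{r\}\subseteq f(w)$ witness \ref{cs5-ax}; \ref{cs2-ax} uses that the union of two non-empty witnessing subfamilies is again a non-empty subfamily of $f(w)$ whose intersection is contained in $\llbracket\f\rrbracket\cap\llbracket\ff\rrbracket$; \ref{cs6-ax} uses that an intersection of non-empty-subfamily-intersections is itself one (idempotency); and the displayed rule is sound by a direct monotonicity argument on the witnessing subfamilies.

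For $(2)\Rightarrow(3)$ I would not use the ultrafilter frame of Definition~\ref{def:ultra}: exactly as in the semilattice case the $\DB$-axioms are not canonical (footnote to Theorem~\ref{thm:complet-js}), and $\,^\cap$ would not be interpreted as the non-empty-subfamily-intersection closure either. Instead, starting from a cs-frame $\mathfrak A$ in which $\top\approx\f$ fails, I would keep the carrier $\mathrm{Uf}(\mathbf F)$ and a suitably reflexive variant of the relations $R_+$, but re-interpret $0$, $+$ and $\,^\cap$ on group intensions exactly as the definition of a relational cs-frame prescribes, and finally close $R$ under these operations; using the cs-frame axioms of $\mathfrak A$ and an analogue of Lemma~\ref{lem:canmorph} one then checks that the result is a genuine relational cs-frame into whose complex algebra $\mathfrak A$ quasi-embeds, so $\f$ still fails. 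For $(3)\Rightarrow(1)$ I would build a canonical cs-frame $\mathfrak B^{LCS}$ as in Example~\ref{ex:canframe-basic}, but with $\mathbf G^{LCS}$ the quotient of $Tm_{\Sigma_{\mathsf{CSL}}}$ by the congruence identifying $\p$ with $\pp$ whenever $\B{\p}\f\tot\B{\pp}\f$ and $\DB{\p}\f\tot\DB{\pp}\f$ lie in $LCS$ for all $\f$: the $LSL$-axioms make $\mathbf G^{LCS}$ a join-semilattice, \ref{cs3-ax} and \ref{cs4-ax} give $0^\cap=0$, \ref{cs5-ax} and \ref{cs6-ax} give extensivity and idempotency of $\,^\cap$, and the closure rule gives its monotonicity, so $\mathbf G^{LCS}$ is a closure semilattice; the cs-frame axioms \ref{ref}--\ref{cs4} then hold by the corresponding schemas, whence $\f\notin LCS$ makes $\top\approx\f$ fail in $\mathfrak B^{LCS}\in\mathsf{FCS}$.

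The step I expect to be the main obstacle is $(2)\Rightarrow(3)$. Because the $\DB$-axioms and the $\,^\cap$-axioms are not canonical, one must verify by hand that the re-interpreted $0$, $+$, $\,^\cap$, together with the reflexive adjustment of $R_+$ and the subsequent closure of $R$ under $\,^\cap$, still validate \ref{js1}--\ref{js4} and \ref{ref}--\ref{cs4} on the modified frame without disturbing the already-fixed values of $\B{\,}^+$ and $\DB{\,}^+$; the delicate point is the ``back''-type condition — that every relation in the re-defined $\p^\cap(w)$ is matched, in the core-neighbourhood sense, by a witness already present — which is precisely the asymmetry flagged in the footnote to Theorem~\ref{thm:complet-js}.
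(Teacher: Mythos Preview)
Your overall three-step cycle matches the paper, and your argument for $(3)\Rightarrow(1)$ is essentially the paper's. There are, however, two genuine gaps.

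\textbf{Soundness of the closure rule.} The displayed rule is \emph{not} sound by ``a direct monotonicity argument on the witnessing subfamilies''. Validity of the premise $\DB{\p}\f\to\DB{\pp}\f$ on a frame only transfers \emph{single} witnesses in $\llbracket\p\rrbracket(w)$ to single witnesses in $\llbracket\pp\rrbracket(w)$; from a non-singleton $X\subseteq\llbracket\p\rrbracket(w)$ with $\bigcap_{r\in X}r(w)\subseteq\llbracket\f\rrbracket$ you cannot conclude anything about $\llbracket\pp\rrbracket(w)$ in the \emph{same} model. The paper instead argues contrapositively at the level of frame validity: given a countermodel to the conclusion, it \emph{changes the group valuation}, adding to each $\llbracket\mathtt{a}\rrbracket(w)$ the intersection of the $\mathtt{a}$-relations occurring in the chosen witnessing family $X$, and then shows by induction on $\p,\pp$ that in the modified model $\DB{\p}\texttt{p}$ holds while $\DB{\pp}\texttt{p}$ still fails. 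This valuation-surgery is the missing idea; a purely semantic monotonicity argument at a fixed valuation does not go through.

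\textbf{The step $(2)\Rightarrow(3)$.} Here your proposal diverges substantively from the paper and, as you suspect, the plan you sketch does not work. Re-interpreting $0,+,{}^{\cap}$ ``as prescribed'' on the full ultrafilter frame and invoking an analogue of Lemma~\ref{lem:canmorph} breaks down precisely at the $\DB{\p^{\cap}}$-clause of the truth lemma: from $e(\DB{\p^{\cap}}\chi)\in w$ you cannot manufacture a \emph{finite} subfamily of $H(\p,w)$ whose intersection at $w$ lies inside $\widehat{e(\chi)}$, because the axioms for ${}^{\cap}$ do not reduce $\DB{\p^{\cap}}$ to a Boolean combination of $\DB{\p}$-formulas (this is the familiar canonicity failure for distributed knowledge). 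The paper's fix has two ingredients you do not mention: first, it works relative to a \emph{finite} closure set $\Gamma$ (subformula-closed and closed under the obvious unfoldings of $+$ and ${}^{\cap}$), defining $\Gamma$-relativised relations $r^{\Gamma}_{a,x}$; second, even after this finitisation only one inclusion of the $\DB{\p^{\cap}}$-clause holds, and the paper then applies the standard \emph{splitting} (unravelling) technique to repair the model into a genuine relational cs-frame. Without both the $\Gamma$-finitisation and the subsequent splitting step, the quasi-embedding you are hoping for does not exist.
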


\section{Further work}
With a reasonable notion of composition of intensional groups, we may use the standard fixpoint construction to introduce common knowledge into our framework. We intend to study the extension with common knowledge in the immediate future. An additional topic for future work is the exploration of variants of the notion of intensional composition. In particular, we are curious if there is a variant giving rise to a monoid structure on intensional groups.
\paragraph{Acknowledgement.} The authors are grateful to the reviewers for useful comments. A preliminary version of this paper was presented at the CELIA Project Meeting in Bayreuth, Germany on 23-24 February 2023. We acknowledge the valuable suggestions we received from the audience at the meeting.

\bibliographystyle{eptcs}
\bibliography{celia}

\newpage
\appendix
\section{Appendix}

\subsection{Proof of Lemma~\ref{lem:canmorph}}

\begin{proof}
It is a standard observation that $f$ embeds any Boolean algebra $\mathbf{F}$ into the power set algebra over $\mathrm{Uf}(\mathbf{F})$. This takes care of (m1) and the injectivity condition. (m2) holds by definition. 

(m3) is established as follows (we write $\B{\:}$ instead of $\B{\:}^{+}$). The inclusion $f(\B{a}x) \subseteq \B{\uf{a}} \hat{x}$ means that if $\B{a}x \in u$ and $r \in \uf{a,u}$ for some $u$, then $r(u) \subseteq \hat{x}$. This holds by the definition of $\uf{a}$.   The converse inclusion $\B{\uf{a}} \hat{x} \subseteq f(\B{a}x)$ is established by contraposition. Assume that $u \notin f(\B{a}x)$, that is, $\B{a}x \notin u$ for some $a$ and $x$. Using (\ref{ax1}--\ref{ax2}), we can show that $v_0 = \{ y \mid \B{a}y \in u \}$ is a filter on $\mathbf{F}$ such that $x \notin v_0$. Hence, $v_0$ extends to an ultrafilter $v$ such that $x \notin v$. Take the relation $r_{a, \top}$, where $\top := x \lor \neg x$ for some $x \in \mathbf{F}$. By (\ref{ax2}--\ref{ax3}), $\B{a}x \notin u$ implies $\DB{a}\top \in u$, and so $r_{a,\top} \in \uf{a, u}$. Moreover, $r_{a,\top}(u,v)$ by the construction of $v$. This means that $u \notin \B{\uf{a}} \hat{x}$.

(m4) is established as follows. The inclusion $f(\DB{a}x) \subseteq \DB{\uf{a}} \hat{x}$ means that if $\DB{a}x \in u$ for some $u$, then there is $r \in \uf{a,u}$ such that $r(u) \subseteq \hat{x}$. Fix such $a, x$ and $u$, and consider the relation $r_{a,x}$. It is clear that $r_{a,x} \in \uf{a, u}$ and $r_{a,x}(u) \subseteq \hat{x}$. 
The converse inclusion $\DB{\uf{a}} \hat{x} \subseteq f(\DB{a}x)$ is established as follows. Let us assume that $\DB{\uf{a}} \hat{x}$, i.e.~there is $r \in \uf{a, u}$ such that $r(u) \subseteq \hat{x}$. This means that $r = r_{a,z}$ for some $z$ such that $\DB{a}z \in u$, and $r(u) = \bigcap \{ \hat{y} \mid \B{a}y \in u \}\cap \hat{z}$. This entails that
$$ \{ z \} \cup \{ y \mid \B{a}y \in u \} \subseteq w \implies x \in w$$ for all $w \in \mathrm{Uf}(\mathbf{F})$. Hence, $x$ is in the filter generated by  $\{ z \} \cup \{ y \mid \B{a}y \in u \}$. By the properties of filters generated by (non-empty) subsets of a lattice, there is a finite $\{ y_1, \ldots, y_n \} \subseteq \{ y \mid \B{a}y \in u \}$ such that $ z \land y_1 \land \ldots \land y_n \leq x$. This means that
$$ \DB{a}z \land \B{a}y_1 \land \ldots \land \B{a}y_n \leq \DB{a}x, $$ using (\ref{ax2}) and (\ref{ax4}). Consequently, $\DB{a}x \in u$ as we wanted to show.
\end{proof}

\subsection{Proof of Lemma~\ref{lem:nbhdmorphisms:preservation} (Preservation of validity for neighborhood frame morphisms)}\label{ap:preservation}

\begin{proof}
To show for each formula $\f$ and each $w\in W$, 
$
(\mathfrak{F}_1,w \Vdash \f \ \ \ \Rightarrow \ \ \mathfrak{F}_2,f(w) \Vdash \f),
$
 it is enough to show that, once we fix valuations on the respective frames so that (i) for each $\mathtt{a} \in Gr$, $g(\llbracket \mathtt{a} \rrbracket_1) = \llbracket \mathtt{a} \rrbracket_2$, and (ii) for each $\mathtt{p} \in Pr$, $w\in \llbracket \mathtt{p} \rrbracket_1$ iff $f(w)\in \llbracket \mathtt{p} \rrbracket_2$, we obtain for each $\f$
$$
w\in \llbracket \f \rrbracket_1 \ \ \Leftrightarrow \ \ f(w)\in  \llbracket \f \rrbracket_2.
$$
This is easily proven by a routine induction on the complexity of a given formula.
\end{proof}

\subsection{Bisimulations of neighborhood frames}\label{ap:bisimulations}
To see what a natural notion of bisimulation is for neighborhood $\Sigma$-models, compared to that of models for epistemic logic with names described in \cite[Definition 6]{BilkovaEtAl2021}, we need to incorporate the algebraic component. For a binary relation $B$, let $X\mathrel{\overline{B}} Y$ iff $\forall x\in X \exists y\in Y\ xBy$ and $\forall y\in Y \exists x\in X \ xBy$.
\begin{definition}[\bfseries Bisimulations]\label{def:nbhd:bisimulation}
Let $(W_1,\nu^1,\mathbf{G_1},\llbracket \rrbracket_1)$ and $(W_2,\nu^2,\mathbf{G_2},\llbracket \rrbracket_2)$ be neighborhood $\Sigma$-models.
A pair $(\cong,B)$, with $\cong \subseteq \mathbf{G_1}\times\mathbf{G_2}$ being a congruence relation, and $B\subseteq W_1\times W_2$, is a bisimulation of neighborhood $\Sigma$-models, if 
\begin{itemize}
    \item[] $\forall\mathtt{a}\in Gr \  \llbracket \mathtt{a} \rrbracket_1 \cong  \llbracket \mathtt{a} \rrbracket_2$ \ and \ $\forall \mathtt{p}\in Pr \ (w_1 B w_2 \Rightarrow (w_1\in \llbracket \mathtt{p} \rrbracket_1 \Leftrightarrow w_2\in \llbracket \mathtt{p} \rrbracket_2))$
    \item[] $(w_1 B w_2 \wedge a_1\cong a_2) \Rightarrow (\forall X\in\nu^1_{a_1}(w_1) \exists Y\in \nu^2_{a_2}(w_2)\  X\mathrel{\overline{B}} Y ) \wedge (\forall Y\in \nu^2_{a_2}(w_2) \exists X\in\nu^1_{a_1}(w_1)\  X\mathrel{\overline{B}} Y)$
\end{itemize}
\end{definition}
As expected, bisimilarity implies modal equivalence for the language of the basic $\Sigma$-logic, and the converse holds for image-finite models (where every core neighborhood set is a finite set of finite sets). Graphs of neighborhood $\Sigma$-frame morphisms are prominent examples of bisimulations, and functional bisimulations correspond to graphs of neighborhood $\Sigma$-frame morphisms.

\subsection{Proof of Theorem~\ref{thm:categorialequivalence} (Categorial equivalence)}

\begin{proof}
For a relational $\Sigma$-frame $\mathfrak{F} = (W,R,\mathbf{G})$, we define the corresponding neighborhood $\Sigma$-frame $\mathfrak{F}^n = (W,\mathbf{G},\{\nu^n_a\}_{a \in G})$ as follows:
$$
\nu^n_a(w) = \{r(w) \mid r\in a(w)\}.
$$
Conversely, for a neighborhood $\Sigma$-frame $\mathfrak{F} = (W,\mathbf{G},\{\nu_a\}_{a \in G})$ we define the corresponding  $\Sigma$-frame $\mathfrak{F}^r = (W,R^r,\mathbf{G})$ as follows:
$$
a^r(w) = \{r \mid r(w)\in \nu_a(w)\}, \ R^r = \bigcup_{a\in G,w\in W} a^r(w).
$$
It is easy to see that $\nu^{n}_a (w) = \{r(w) \mid r\in a^r(w)\} = \nu_a (w)$ and $a^{r}(w) = \{r \mid r(w)\in \nu^n_a(w)\} = a(w)$. However, going there-and-back on a relational frame, we do not recover the same $R$, as we can in principle recover only those relations $r$ in $R$ that are in some $a(w)$, but we also include relations who agree with $r$ on $w$. Still the resulting frame ends up to be isomorphic to the original one in terms of frame morphisms, which is what matters.

For the morphism part, we use the fact that the corresponding frames are defined over the same set $W$ and $\Sigma$-algebra $\mathbf{G}$, and therefore we use the same underlying map in both directions. First, we observe that morphisms of relational $\Sigma$-frames (which can be read of the Definition \ref{def:morph} of morphisms of $\Sigma$-frames) can equivalently be understood as pairs of maps $(g:\mathbf{G} \to \mathbf{G'}, f:W \to W')$, where $g$ is a homomorphism of $\Sigma$-algebras, satisfying:
\begin{itemize}
\item[] (there) $\forall r\in a(w) \  \exists r'\in g(a)(f(w)) \  (f[r(w)] = r'(f(w))) $,
\item[] (back) $\forall r'\in g(a)(f(w)) \  \exists r\in a(w) \  (f[r(w)] = r'(f(w))) $.
\end{itemize}
It is not hard to see now that the two notions of morphisms are equivalent, when applied to the translated frames respectively.
\end{proof}

\subsection{Proof of Theorem \ref{thm:complet-js} (Completeness of semilattice logic)}\label{ap:semilattices}

\begin{proof}
If $\f \in LSL$, then $\f \in Log(\mathsf{rFSL})$. It is sufficient to show that (\ref{ax-js1}--\ref{ax-js4}) are valid on relational js-frames. This is easily shown using the definition of js-frames. For instance, $w \models \B{0}\f$ iff $\forall r \in \llbracket 0\rrbracket(w) : r(w) \subseteq \llbracket \f\rrbracket$. However, since $\llbracket 0\rrbracket(w) = \emptyset$, this is trivially satisfied for all $w$. As another example, note that $w \models \DB{\p + \pp}\f$ iff there is $r \in \llbracket \p + \pp\rrbracket(w)$ such that $r(w) \subseteq \llbracket \f\rrbracket$. However, $\llbracket \p + \pp\rrbracket(w) = \llbracket\p\rrbracket(w) \cup \llbracket \pp\rrbracket(w)$ and so the previous statement is equivalent to $\exists r \in \llbracket \p\rrbracket(w) : r(w) \subseteq \llbracket \f\rrbracket$ or $\exists r \in \llbracket \pp\rrbracket(w) : r(w) \subseteq \llbracket \f\rrbracket$ which is equivalent to $w \models \DB{\p}\f \lor \DB{\pp}\f$.

\medskip

$\f \in Log(\mathsf{rFSL})$ implies $(\top \approx \f) \in Eq(\mathsf{FSL})$. We reason by contraposition. Fix a js-frame $\mathfrak{A} = (\mathbf{F}, \mathbf{G}, \B{\,}, \DB{\,})$ and an evaluation function $e$ such that $e(\f) \neq e(\top)$ for some $\f \in Fm_{\Sigma}$. We define the relational $\Sigma_{\mathsf{SL}}$-frame $\mathfrak{F} = (\mathrm{Uf}(\mathbf{F}), R, \mathbf{H})$ where $R = (2^{\mathrm{Uf}(\mathbf{F})})^{\mathrm{Uf}(\mathbf{F})}$ and $\mathbf{H} = (H, 0^{\mathbf{H}}, +^{\mathbf{H}})$ is specified as follows: $H = \{ H(\p) \mid \p \in Tm_{\Sigma_{\mathsf{SL}}} \}$ where $H(\p) \in (2^{R})^{\mathrm{Uf}(\mathbf{F})}$ 
such that
\begin{itemize}
\item $H(\mathtt{a})(u) = \{ r_{e(\mathtt{a}), e(\f)} \mid e(\DB{\mathtt{a}}\f) \in u \}$;\footnote{Recall the definition of $r_{a,x}$ in Def.~\ref{def:ultra}: $r_{a,x}(u) = \bigcap \{ \hat{y} \mid \B{a}y \in u \} \cap \hat{x}$.}
\item $H(0)(u) = 0^{\mathbf{H}}(u) = \emptyset$; and
\item $H(\p + \pp)(u) = (H(\p) +^{\mathbf{H}} H(\pp))(u) = H(\p)(u) \cup H(\pp)(u)$.
\end{itemize}
It is clear that $\mathfrak{F}$ is a relational js-frame. We define $V : Tm \cup Fm \to 2^{\mathrm{Uf}(\mathbf{F})} \cup H$ by $V(\chi) = \widehat{e(\chi)}$ and $V(\p) = H(\p)$. We show that $V$ is an interpretation function on $\mathfrak{F}$. $V$ is a Boolean homomorphism by the properties of ultrafilters and it is a $\Sigma$-homomorphism from $Tm$ to $\mathbf{H}$ by the definition of $H(\p)$. (For instance, $V(\p + \pp)(u) = H(\p + \pp)(u) = H(\p)(u) \cup H(\pp)(u) =  (V(\p) +^{\mathbf{H}} V(\pp))(u)$ for all $u$; hence, $V(\p + \pp) = (V(\p) +^{\mathbf{H}} V(\pp)$.) It remains to show the leftmost equalities in the following (recall that we write $H(\gamma, w)$ instead of $H(\gamma)(w)$):
\begin{itemize}
\item[(i)] $V(\B{\ppp}\chi) = \B{V(\ppp)}V(\chi) = \{ w \mid \forall r \in H(\ppp, w) : r(w) \subseteq \widehat{e(\chi)}\}$; and 
\item[(ii)] $V(\DB{\ppp}\chi) = \DB{V(\ppp)}V(\chi) = \{ w \mid \exists r \in H(\ppp, w) : r(w) \subseteq \widehat{e(\chi)}\}$.
\end{itemize}
We show both by induction on the complexity of $\ppp$. (i) The base case $\ppp \in Gr$ is established similarly as the corresponding case in the proof of Lemma \ref{lem:canmorph}, since $H(\ppp)$ is in this case defined as $G(\ppp)$ in the ultrafilter frame. The case $\ppp = 0$ is established as follows: $w \in V(\B{0}\chi)$ iff $e(\B{0}\chi) \in w$ iff $\B{e(0)}e(\chi) \in w$ iff $\B{0} e(\chi) \in w$ iff (using axiom \ref{js1}) $\top \in w$ iff $\forall r \in \emptyset : r(w) \subseteq \widehat{e(\chi)}$ (both are true for all $w$)
iff $\forall r \in H(0, w) : r(w) \subseteq V(\chi)$ iff $w \in \B{V(0)}V(\chi)$. The case $\ppp = \p + \pp$ is established as follows: $w \in V(\B{\p + \pp}\chi)$ iff $w \in e(\B{\p + \pp}\chi)$ iff $w \in \B{e(\p) + e(\pp)}e(\chi)$ iff (using axiom \ref{js3}) $w \in \B{e(\p)}e(\chi)$ and $w \in \B{e(\pp)}e(\chi)$ iff $w \in V(\B{\p}\chi)$ and $w \in V(\B{\pp}\chi)$ iff (by the induction hypothesis) $\forall r \in H(\p, w) : r(w) \subseteq \widehat{e(\chi)}$ and $\forall r \in H(\pp, w) : r(w) \subseteq \widehat{e(\chi)}$ iff $\forall r \in H(\p, w) \cup H(\pp, w) : r(w) \subseteq V(\chi)$ iff $\forall r \in H(\p + \pp, w) : r(w) \subseteq V(\chi)$ iff $w \in \B{V(\p + \pp)}V(\chi)$. Part (ii) is established similarly, using axiom (\ref{js2}) in the case $\ppp = 0$ and axiom (\ref{js4}) in the case $\ppp = \p + \pp$.

Now since $e(\f) \neq e(\top)$, there is $u \in \mathrm{Uf}(\mathbf{F})$ such that $e(\f) \in u$ and $e(\top) \notin u$ by the Prime Filter Theorem. Hence, $V(\f) \neq V(\top)$ and so $\f$ is not valid in the relational js-model $(\mathfrak{F}, V)$. 

\medskip

$(\top \approx \f) \in Eq(\mathsf{FSL})$ implies $\f \in LSL$. 
We define the \emph{canonical $LSL$-frame} as follows.\footnote{The definition of the canonical $LSL$-frame resembles the definition of the basic canonical $L$-frame from Example \ref{ex:canframe-basic}. However, we cannot use $\mathfrak{B}^{LSL}$ here since the group algebra $\mathbf{G}^{LSL}$ in $\mathfrak{B}^{LSL}$ is not a join-semilattice -- it is the term algebra. Hence, we have to define a suitable $LSL$-congruence on the term algebra and prove that it gives rise to a join-semilattice.} Let $\equiv$ be a binary relation on formulas defined as $\equiv_L$ (for $L = LSL$) in Example \ref{ex:canframe-basic}, and let $\equiv^{Tm}$ be a binary relation on $Tm$ such that $\p \equiv^{Tm} \pp$ iff $\B{\p}\f \tot \B{\pp}\f \in LSL$  and $\DB{\p}\f \tot \DB{\pp}\f \in LSL$ for all $\f \in Fm$. Let $[\f]$ be the equivalence class of $\f$ under $\equiv$ and let $[\p]$ be the equivalence class of $\p$ under $\equiv^{Tm}$. It can be shown that $\equiv$ is a congruence on $Fm$ (the usual argument) and $\equiv^{Tm}$ is a congruence on $Tm$. The latter is established using the ``reduction axioms'' for the semilattice operators: if $\B{\p}\f \tot \B{\pp}\f \in LSL$ for all $\f$, then $\B{\p + \ppp}\f \tot \B{\pp + \ppp} \in LSL$ since $\B{\p + \ppp}\f \tot \B{\p}\f \land \B{\ppp}\f \in LSL$ using (\ref{ax-js3}), and so $\B{\p + \ppp}\f \tot \B{\pp}\f \land \B{\ppp}\f \in LSL$ by the assumption which means that $\B{\p + \ppp}\f \tot \B{\pp + \ppp}\f \in LSL$ using (\ref{ax-js3}) again. Hence, we obtain the Boolean algebra $\mathbf{F}$ of equivalence classes $[\f]$, where $o^{\mathbf{F}} ([\f_1], \ldots, [\f_n]) = [o(\f_1, \ldots, \f_n)]$ for all Boolean operators $o$, and the join-semilattice $\mathbf{G}$ of equivalence classes $[\p]$, where $o^{\mathbf{G}} ([\p_1], \ldots, [\p_n]) = [o(\p_1, \ldots, \p_n)]$ for all $o \in \{ 0, + \}$. (The fact that $\mathbf{G}$ is a join-semilattice is easily shown using the reduction axioms: for instance, $[\p + \p] = [\p]$ since $\B{\p + \p}\f \equiv \B{\p}\f$ and $\DB{\p + \p}\f \equiv \DB{\p}\f$ which means that $\pp \in [\p + \p]$ iff $\beta \in [\p]\f$. 
 Moreover, let $\B{\,}$ and $\DB{\,}$ be functions of the type $\mathbf{F} \times \mathbf{G} \to \mathbf{F}$ such that
$$ \B{[\p]} [\f] = [\B{\p}\f] \quad\text{and}\quad
\DB{[\p]} [\f] = [\DB{\p}\f]
$$
(note that these functions are well defined since $\equiv$ and $\equiv^{Tm}$ are both congruences). The \emph{canonical $LSL$-frame} is $\mathfrak{C}^{LSL} = (\mathbf{F}, \mathbf{G}, \B{\,}, \DB{\,})$. It is clear that $\f \in LSL$ iff $\f \approx \top$ is valid in $\mathfrak{C}^{LSL}$ (Prime Filter Theorem). Hence, if $\f \notin LSL$, then there is a js-frame that invalidates $\top \approx \f$, establishing our claim by contraposition.
\end{proof}

\subsection{Proof of Theorem \ref{thm:complet-mon} (Completeness of right-unital magma logic)}

\begin{proof}
$\f \in LRUM$ implies $\f \in Log(\mathsf{rFRUM})$. It is sufficient to check that the extra axioms (\ref{mon1-ax}--\ref{mon4-ax}) of $LRUM$ are valid in all relational rum-frames. The validity of (\ref{mon1-ax}--\ref{mon2-ax}) is clear.  To show that (\ref{mon3-ax}) is valid, we reason as follows: $w \not\models \B{\p}\B{\pp}\f$ iff there are $r \in \llbracket\p\rrbracket(w)$, $u \in r(w)$ and $q \in \llbracket \pp\rrbracket(u)$ such that $q(u) \not\subseteq \llbracket \f\rrbracket$ iff\footnote{Left to right: define $q'$ so that $q'(u) = q(u)$ and $q'(v)$ for $v \neq$ u is fixed in an arbitrary way so that $q' \in \llbracket \pp \rrbracket^{\dagger}$ (this can always be done). Right to left: If $q' \in \llbracket \pp \rrbracket^{\dagger}$ and $q'(u) \neq\emptyset$, then by definition there has to be $q \in \llbracket \pp \rrbracket(u)$ such that $q(u) = q'(u)$.} 
there are $r \in \llbracket\p\rrbracket(w)$, $u \in W$ and $q' \in \llbracket \pp\rrbracket^{\dagger}$ such that $r(w,u)$ and $q'(u) \not\subseteq \llbracket \f\rrbracket$ iff there is $s \in \llbracket \p \cdot \pp\rrbracket(w)$ such that $s(w) \not\subseteq \llbracket \f\rrbracket$ iff $w \not\models \B{\p \cdot \pp} \f$. To show that (\ref{mon4-ax}) is valid, we reason as follows: $w \models \DB{\p \cdot \pp}\f$ iff there is $r \in \llbracket \p\rrbracket(w) \circ \llbracket \pp\rrbracket^{\dagger}$ such that $r(w) \subseteq \llbracket \f\rrbracket$ iff there are $q \in \llbracket \p\rrbracket(w)$ and $s \in \llbracket \pp\rrbracket^{\dagger}$ such that $(q \circ s)(w) \subseteq \llbracket \f\rrbracket$ iff there are $q \in \llbracket \p\rrbracket(w)$ and $s \in \llbracket \pp\rrbracket^{\dagger}$ such that $s(q(w)) \subseteq \llbracket \f\rrbracket$ iff there is $q \in \llbracket\p\rrbracket(w)$ such that for all $u \in q(w)$, if $\llbracket \pp\rrbracket(u) \neq \emptyset$, then there is $t \in \llbracket \pp\rrbracket(u)$ such that $t(u) \subseteq \llbracket \f\rrbracket$ iff $u \models \DB{\p}(\BD{\pp} \bot \lor \DB{\pp}\f)$.

\medskip

$\f \in Log(\mathsf{rFRUM})$ implies $ (\top \approx \f) \in Eq(\mathsf{FRUM})$. We reason by contraposition. Fix a rum-frame $\mathfrak{A} = (\mathbf{F}, \mathbf{G}, \B{\,}, \DB{\,})$ and an evaluation function $e$ such that $e(\f) \neq e(\top)$ for some $\f \in Fm_{\Sigma_{\mathsf{M}}}$. We define the relational $\Sigma_{\mathsf{M}}$-frame $\mathfrak{F} = (\mathrm{Uf}(\mathbf{F}), R, \mathbf{H})$ where $R = (2^{\mathrm{Uf}(\mathbf{F})})^{\mathrm{Uf}(\mathbf{F})}$ and $\mathbf{H} = (H, 1^{\mathbf{H}}, \cdot^{\mathbf{H}})$ is specified as follows: $H = \{ H(\p) \mid \p \in Tm_{\Sigma_{\mathsf{M}}} \}$ where $H(\p) \in (2^{R})^{\mathrm{Uf}(\mathbf{F})}$ such that
\begin{itemize}
\item $H(\mathtt{a})(u) = \{ r_{e(\mathtt{a}), e(\f)} \mid e(\DB{\mathtt{a}}\f) \in u \}$;
\item $H(1)(u) = 1^{\mathbf{H}}(u) =  \{ \mathrm{id}_{\mathrm{Uf}(\mathbf{F})} \}$; and
\item $H(\p \cdot \pp)(u) = (H(\p) \cdot^{\mathbf{H}} H(\pp))(u) =  (H(\p) \otimes H(\pp)(u) = H(\p)(u) \circ H(\pp)^{\dagger}$.
\end{itemize}
It is clear that $\mathfrak{F}$ is a relational rum-frame. We define $V : Tm \cup Fm \to 2^{\mathrm{Uf}(\mathbf{F})} \cup H$ by $V(\chi) = \widehat{e(\chi)}$ and $V(\p) = H(\p)$. We show that $V$ is an interpretation function on $\mathfrak{F}$. $V$ is a Boolean homomorphism by the properties of ultrafilters and it is a $\Sigma_{\mathsf{M}}$-homomorphism from $Tm$ to $\mathbf{H}$ by the definition of $H(\p)$. (For instance, $V(\p \cdot \pp)(u) = H(\p \cdot \pp)(u) = H(\p)(u) \otimes H(\pp)^{\dagger} =  (V(\p) \cdot^{\mathbf{H}} V(\pp))(u)$ for all $u$; hence, $V(\p \cdot \pp) = (V(\p) \cdot^{\mathbf{H}} V(\pp)$.) It remains to establish the leftmost equalities in the following:
\begin{itemize}
\item[(i)] $V(\B{\ppp}\chi) = \B{V(\ppp)}V(\chi) = \{ w \mid \forall r \in H(\ppp, w) : r(w) \subseteq \widehat{e(\chi)}\}$; and 
\item[(ii)] $V(\DB{\ppp}\chi) = \DB{V(\ppp)}V(\chi) = \{ w \mid \exists r \in H(\ppp, w) : r(w) \subseteq \widehat{e(\chi)}\}$.
\end{itemize}
We show both by induction on the complexity of $\ppp$. (i) The base case $\ppp \in Gr$ is established similarly as the corresponding case in the proof of Lemma \ref{lem:canmorph}, since $H(\ppp)$ is in this case defined as $G(\ppp)$ in the ultrafilter frame. The case $\ppp = 1$ is established as follows: $w \in V(\B{1}\chi)$ iff $e (\B{1}\chi) \in w$ iff $\B{e(1)}e(\chi) \in w$ iff $\B{1}e(\chi) \in w$ iff (using axiom \ref{mon1}) $e(\chi) \in w$ iff $\forall r \in H(1, w) : r(w) \subseteq \widehat{e (\chi)}$ iff $\forall r \in H(1, w) : r(w) \subseteq V(\chi)$ iff $w \in \B{V(1)}V(\chi)$. The case $\ppp = \p \cdot \pp$ is established as follows: $w \in V(\B{\p \cdot \pp}\chi)$ iff $\B{e(\p) \cdot e(\pp)}e(\chi) \in w$ iff (using axiom \ref{mon3}) $\B{e(\p)}\B{e(\pp)}e(\chi) \in w$ iff $w \in V(\B{\p}\B{\pp}\chi)$ iff (by induction hypothesis applied to $\p$) $\forall r \in H(\p, w) \forall u (r(w,u) \to u \in V(\B{\pp}\chi))$ iff (by induction hypothesis applied to $\pp$) $\forall r \in H(\p, w) \forall u (r(w,u) \to (\forall q \in H(\pp, u) : q(u) \subseteq V(\chi)))$ iff $\forall r, q (r \in H(\p, w) \And q \in H(\pp)^{\dagger} \to q(r(w)) \subseteq V(\chi))$ iff $\forall r, q (r \in H(\p, w) \And q \in H(\pp)^{\dagger} \to (r \circ q)(w) \subseteq V(\chi))$ iff $\forall s \in H(\p, w) \circ H(\pp)^{\dagger} : s(w) \subseteq V(\chi)$ iff $\forall s \in H(\p \cdot \pp, w) : s(w) \subseteq V(\chi)$ iff $w \in \B{H(\p \cdot \pp)}V(\chi)$ iff $w \in \B{V(\p \cdot \pp)}V(\chi)$. Part (ii) is established similarly, using axiom (\ref{mon2}) in the case $\ppp = 1$ and axiom (\ref{mon4}) in the case $\ppp = \p \cdot \pp$.

\medskip

$ (\top \approx \f) \in Eq(\mathsf{FRUM})$ implies $\f \in LRUM$. 
We define the \emph{canonical $LRUM$-frame} $\mathfrak{C}^{LM}$ similarly as we defined $\mathfrak{C}^{LSL}$ in the third part of the proof of Theorem \ref{thm:complet-js}, but we use $LRUM$ instead of $LSL$, of course. We have to show in our specific setting that (i) $\equiv^{Tm}$ is a congruence and that (ii) $\mathbf{G}$ is a right-unital magma. (i) is established using axioms (\ref{mon3-ax}--\ref{mon4-ax}). To establish (ii), we need to show that $[1]$ is the right unit with respect to $\otimes^{\mathbf{G}}$: to see this, it is sufficient to observe that $\B{\p \cdot 1}\f \equiv \B{\p}\B{1}\f \equiv \B{\p}\f$ and $\DB{\p \cdot 1}\f \equiv \DB{\p}(\BD{1}\bot \lor \DB{1}\f) \equiv \DB{\p}\f$. Hence, $\p \equiv^{Tm} \p \cdot 1$.

As before, the Prime Filter Theorem entails that $\f \in LRUM$ iff $\f \approx \top$ is valid in $\mathfrak{C}^{LRUM}$. Hence, if $\f \notin LRUM$, then there is a rum-frame that invalidates $\top \approx \f$, establishing our claim by contraposition.
\end{proof}

\subsection{Proof of Theorem \ref{thm:complet-cs} (Completeness of closure semilattice logic)}

\begin{proof}
$\f \in LCS$ implies $\f \in Log(\mathsf{rFCS})$. Validity of axioms (\ref{ref-ax}--\ref{cs6-ax}) in relational cs-models is easily checked. 
The rule $\dfrac{\DB{\p}\f \to \DB{\pp}\f}{\DB{\p^{\cap}}\f \to \DB{\pp^{\cap}}\f}$ preserves validity: Assume there is a counterexample to the conclusion of the rule (we may assume $\phi = \texttt{p}$ is atomic). Assume $w \models \DB{\p^{\cap}}\texttt{p}$ and $w \not\models \DB{\pp^{\cap}}\texttt{p}$. Then $w \not\models \DB{\pp} \texttt{p}$. We want to show that there is $\models'$ such that $w \models' \DB{\p}\texttt{p}$ and $w \not\models' \DB{\pp}\texttt{p}$. We know that $\p$ cannot be $0$ since $0^{\cap} = 0$. If $\p = \gamma^{\cap}$ then we are done ($\models'$ is $\models$). Then there is a non-empty $X \subseteq \llbracket \p \rrbracket$(w) such that $\bigcap_{r \in X} r (x) \subseteq \llbracket p \rrbracket$ (W.l.o.g. we consider $X$ to be a minimal nonempty such set). Now define for each $\mathtt{a} \in Gr$ the relation  $r_{\mathtt{a}} = \bigcap \{ r \in X \mid r \in \llbracket \mathtt{a} \rrbracket(w)\}$ and define $\llbracket \mathtt{a} \rrbracket'(w) = \llbracket \mathtt{a} \rrbracket(w) \cup \{ r_{\mathtt{a}} \}$ in case $r_{\mathtt{a}} \neq\emptyset$, and $\llbracket \mathtt{a} \rrbracket'(w) = \llbracket \mathtt{a} \rrbracket(w) \}$ otherwise. The interpretations $\llbracket \gamma \rrbracket'$ of complex $\gamma$ are computed as usual. We can then prove by induction on $\p,\pp$ that $w \models' \DB{\p}\texttt{p}$ while $w\not\models' \DB{\pp}\texttt{p}$. 

\medskip

$\f \in Log(\mathsf{rFCS})$ implies $(\top \approx \f) \in Eq(\mathsf{FCS})$. We reason by contraposition. Fix a cs-frame $\mathfrak{A} = (\mathbf{F}, \mathbf{G}, \B{\,}, \DB{\,})$ and an evaluation function $e$ such that $e(\f) \neq e(\top)$ for some $\f \in Fm_{\Sigma_{\mathsf{M}}}$. Let $\Gamma$ be the smallest set that contains $\f$ and $\top$, is closed under taking subformulas, and
    \begin{itemize}
    \item $\B{\p}\chi \in \Gamma$ iff $\DB{\p}\chi \in \Gamma$
    \item $\B{\p + \pp}\chi \in \Gamma$ only if $\B{\p}\chi \in \Gamma$ and $\B{\pp}\chi \in \Gamma$
    \item $\DB{\p + \pp}\chi \in \Gamma$ only if $\DB{\p}(\neg \DB{\pp}\top \lor \DB{\pp}\chi) \in \Gamma$
    \item $\B{\p^{\cap}}\chi \in \Gamma$ only if $\B{\p}\chi \in \Gamma$
    \item $\DB{\p^{\cap}}\chi \in \Gamma$ only if $\DB{\p}\top \in \Gamma$
    \end{itemize}
    It is easily seen that $\Gamma$ is always finite. 
    For $x \in \mathbf{F}$ and $a \in \mathbf{G}$ such that $x = e (\chi)$ and $a = e(\p)$ for some $\B{\p}\chi \in \Gamma$, we define
  \begin{equation*}
    r^{\Gamma}_{a, x} : w \mapsto \bigcap \{ \widehat{e(\ff)} \mid \B{\p}\f \in \Gamma \And e(\B{\p}\chi) \in w \} \cap \widehat{x} \, .
    \end{equation*}  
   For other pairs of $x, a$ we define $r^{\Gamma}_{a,x} : w \mapsto \emptyset$.
    
  We define the relational frame $\mathfrak{F}^{\Gamma} = (\mathrm{Uf}(\mathbf{F}), R, \mathbf{H})$ as before, but this time $\mathbf{H} = (H, 0^{\mathbf{H}}, +^{\mathbf{H}}, \,^{\cap^{\mathbf{H}}})$ is specified using the relations $r_{a,x}^{\Gamma}$ as follows:
\begin{itemize}
\item $H(\mathtt{a})(u) = \{ r^{\Gamma}_{e(\mathtt{a}), e(\f)} \mid e(\DB{\mathtt{a}}\f) \in u \}$;
\item $H(0)(u) = 0^{\mathbf{H}}(u) = \emptyset$;
\item $H(\p + \pp)(u) = (H(\p) +^{\mathbf{H}} H(\pp))(u) = H(\p)(u) \cup H(\pp)(u)$; and
\item $H(\p^{\cap})(u) = H(\p)^{\cap^{\mathbf{H}}}(u) = \{ r \in R \mid r(u) = \bigcap_{q \in X} q(w) \text{ for some non-empty } X \subseteq H(u)  \}$.
\end{itemize}
It is clear that $\mathfrak{F}$ is a relational cs-frame. We define $V : Tm \cup Fm \to 2^{\mathrm{Uf}(\mathbf{F})} \cup H$ by $V(\chi) = \widehat{e(\chi)}$ and $V(\p) = H(\p)$. We show that $V$ is an interpretation function on $\mathfrak{F}$. $V$ is a Boolean homomorphism by the properties of ultrafilters and it is a $\Sigma$-homomorphism from $Tm$ to $\mathbf{H}$ by the definition of $H(\p)$. We would like to establish the leftmost equalities in the following:
\begin{itemize}
\item[(i)] If $\B{\ppp}\chi \in \Gamma$, then $V(\B{\ppp}\chi) = \B{V(\ppp)}V(\chi) = \{ w \mid \forall r \in H(\ppp, w) : r(w) \subseteq \widehat{e(\chi)}\}$; and 
\item[(ii)] if $\B{\ppp}\chi \in \Gamma$, then $V(\DB{\ppp}\chi) = \DB{V(\ppp)}V(\chi) = \{ w \mid \exists r \in H(\ppp, w) : r(w) \subseteq \widehat{e(\chi)}\}$.
\end{itemize}
(i) can be show easily by induction on the complexity of $\ppp$. However, similarly to the extensional case, only the $\supseteq$ inclusion of (ii) can be shown to hold for $\mathfrak{F}^{\Gamma}$. Here we can use the standard technique of splitting to transform $\mathfrak{F}^{\Gamma}$ into a correct relational cs-frame (see the extended version of \cite{BilkovaEtAl2021}). We omit the details.

\medskip

$(\top \approx \f) \in Eq(\mathsf{FCS})$ implies $\f \in LCS$. 
We define the \emph{canonical $LCS$-frame} $\mathfrak{C}^{LCS}$ similarly as we defined $\mathfrak{C}^{LSL}$ in the third part of the proof of Theorem \ref{thm:complet-js}, but we use $LCS$ instead of $LSL$, of course. We have to show in our specific setting that (i) $\equiv^{Tm}$ is a congruence and that (ii) $\mathbf{G}$ is a closure semilattice. Both items are checked using the corresponding closure axioms, and the rule (to show that $^{\cap}$ is a monotonic operator on the $\equiv^{Tm}$-quotient algebra).

\end{proof}

\end{document}